\theoremstyle{plain}
\newtheorem{theorem}{Theorem}
\newtheorem{lemma}{Lemma}
\newtheorem{proposition}{Proposition}
\theoremstyle{definition}
\newtheorem{definition}{Definition}
\newtheorem{corollary}{Corollary}
\theoremstyle{remark}
\newtheorem{remark}{Remark}
\newtheorem*{remark*}{Remark}
\numberwithin{equation}{section}
\begin{document}

\title{Extremal positive maps on $M_{3}(\mathbb{C})$ \\ and idempotent matrices}

\author[1]{Marek Miller\thanks{marek.miller@ift.uni.wroc.pl}}
\author[1]{Robert Olkiewicz\thanks{robert.olkiewicz@ift.uni.wroc.pl}}
\affil[1]{Institute of Theoretical Physics, Uniwersytet Wroc{\l}awski, Poland}

\date{\today}

\maketitle

\abstract{
A new method of analysing positive bistochastic maps on the algebra of complex
matrices $M_{3}$ has been proposed.
By identifying the set of such maps with a convex set of linear operators on
$\mathbb{R}^{8}$, one can employ techniques from the theory of
compact semigroups to obtain results concerning asymptotic properties
of positive maps.
It turns out that the idempotent elements play a crucial role
in classifying the convex set into subsets, in which representations of
extremal positive maps are to be found.
It has been show that all positive bistochastic maps,
extremal in the set of all positive maps of $M_{3}$,
that are not Jordan isomorphisms of $M_3$ are represented by matrices that fall into two possible
categories, determined by the simplest idempotent matrices:
one by the zero matrix, and the other by a one dimensional orthogonal projection.
Some norm conditions for matrices representing possible extremal maps
have been specified and examples of maps from both categories have been
brought up, based on the results published previously.

\vspace{0.1cm}
{\scriptsize \noindent \textbf{Keywords:}
positive maps, extremal, idempotent, semigroup.}

}

\section*{Introduction}
\label{sec:Introduction}

\paragraph{}
Positive maps of operator algebras constitute a rich area of research, directly
connected to the theory of quantum entanglement.
In 1990s,
A.\,Peres and P.\,M.\,R. {Horo-deckis} pointed out at the intrinsic relation
between separable states of composite quantum systems
and positive maps of algebras of observables
\cite{peres1996separability,horodecki1996separability}.
The well established criterion of separability, proposed in the mentioned papers,
reveals a one-to-one correspondence between
positive maps and entanglement witnesses
\cite{chruscinski2014entanglement}.
This \emph{Peres-Horodecki criterion},
originally proposed for maps on algebras $M_{n}$ of square complex matrices of size $n$,
which holds true even in the most general setting of
injective von Neumann algebras \cite{miller2014horodeckis},
is computationally feasible as long as the structure of general positive
maps on operator algebras representing a composite quantum system in
question is known.
To this day, the complete characterisation
of positive maps have been obtained only for the algebra $M_{2}$
and the maps between $M_{2}$ and $M_{3}$
\cite{stormer1963positive,woronowicz1976positive}.

To analyse the structure of positive maps in the simplest, still unresolved
case of maps on $M_{3}$,
we propose a continuation of the reasoning conducted in our previous paper
\cite{miller2015stable}.
We have established a connection between positive maps that preserve both
the trace of matrices and the identity matrix, the so-called bistochastic
maps, and their stable subspaces that have the structure of Jordan algebras.
Here, we go one step further and explore the relation between those stable
subspaces and the idempotent real matrices that represent the conditional
expectations projecting onto the spaces.
To this end, we employ mostly geometrical techniques that allowed us before to
establish the structure theorem for maps on the algebra $M_{2}$
\cite{miller2015topology},
as well as the methods from the theory of compact semigroups
\cite{schwarz1955hausdorff,chow1975compact}.

The main result of this paper makes possible to outline a program,
suggesting where the extremal positive maps on $M_{3}$ are to be found
with respect to their connection to associated idempotents
(see Theorem \ref{thm:LowerIndices} below).
We have shown that all positive bistochastic maps,
extremal in the set of all positive maps of $M_{3}$,
that are not
Jordan isomorphisms of $M_3$ are represented by matrices that fall into two possible
categories, determined by the simplest idempotent matrices:
one by the zero matrix, and the other by a one dimensional orthogonal projection.
As a corollary, we specify some norm conditions for matrices representing possible
extremal maps.
The structure of the paper concentrates on building a mathematical framework
necessary to prove the final result.
We start with a brief listing of necessary notation and definitions.


\section{Preliminaries}
\label{sec:Preliminaries}

\paragraph{}
Let $n, k \in \mathbb{N}$ be grater than 0.
Let $M_{n} = M_{n}(\mathbb{C})$ denote the algebra of complex square
matrices of size $n$.
The algebra of real matrices will always be denoted explicitly by
$M_{n}(\mathbb{R})$.
For $A \in M_{n}$, the norm $||A||$ is understood to be the standard operator
norm, i.e. the maximal singular value of the matrix $A$.
For the Hilbert-Schmidt norm (HS-norm) of $A$,
we reserve the symbol
$||A||_{HS} = \sqrt{\text{tr} A^{*} A}$,
where $\text{tr}$ denotes the trace operation,
and $A^{*}$ is the conjugate-transpose of $A$.
The HS-norm of $A$ can be computed as the sum of squares of
singular values of $A$.
The Hilbert-Schmidt inner product, induced by the HS-norm,
is defined as
$\langle A, B \rangle_{HS} = \text{tr} A^{*} B$,
for $A, B \in M_{n}$.
The identity matrix of $M_{n}$, we denote by $\mathbf{1}_{n}$,
or simply $\mathbf{1}$, and the null matrix by $\hat{0}$.

We say that a matrix $A \in M_{n}$ is positive-semidefinite,
or simply \emph{positive},
if the inner product $\langle \eta, A \eta \rangle \geq 0$,
for any vector $\eta \in \mathbb{C}^{n}$
(i.e. $A = A^{*}$ and $A$ has a non-negative spectrum).
A linear map $S\!: M_{n} \rightarrow M_{n}$ is said to be positive,
indicated: $S \geq 0$,
if for any $A \in M_{n}$ such that $A \geq 0$,
we have $S(A) \geq 0$.
The operator norm of the linear map $S$ is given by
\begin{equation}
    \label{eq:SupremumNormOfS}
    ||S|| = \sup \left\{ \, ||S(A)||, \,\, A \in M_{n}, \, ||A|| = 1 \right\}.
\end{equation}
It is true that if $S$ is positive, then
$||S|| = ||S(\mathbf{1})||$.
Any positive map is Hermitian,
i.e.  $S(A^{*}) = S(A)^{*}$, for all $A \in M_{n}$.
The identity map of $M_{n}$ is labelled $I_{n}$,
or simply $I$.
The convex cone of all positive maps of $M_{n}$ is denoted
by $\mathcal{P}(M_{n})$.
A positive map $S$ is extremal,
if for any positive map $T: M_{n} \rightarrow M_{n}$ such that
$S - T \in \mathcal{P}(M_{n})$,
i.e. $0 \leq T \leq S$,
we have $T = \alpha S$ for some number $0 \leq \alpha \leq 1$.
It is true that every positive map
can be written as a convex combination of extremal ones.
If  $S\!: M_{n} \rightarrow M_{n}$ is a positive map such that
$S(\mathbf{1}) = \mathbf{1}$ and
$\text{tr} S(A) = \text{tr} A$,
for any $A \in M_{n}$,
then we call it \emph{bistochastic},
or \emph{doubly stochastic}.

From now on,
let us fix $n = 3$.
We choose the set of normalised Gell-Mann matrices,
$\left \{ \lambda_{\mu} \right \}_{\mu=0}^{8}$, by taking:
\begin{equation}
\nonumber
    \lambda_{1} = \frac{1}{\sqrt{2}} \begin{pmatrix}
            0 & 1 & 0 \\
            1 & 0 & 0 \\
            0 & 0 & 0
        \end{pmatrix}, \quad
    \lambda_{2} = \frac{1}{\sqrt{2}} \begin{pmatrix}
            0 & -i & 0 \\
            i & 0 & 0 \\
            0 & 0 & 0
        \end{pmatrix}, \quad
    \lambda_{3} = \frac{1}{\sqrt{2}} \begin{pmatrix}
            1 & 0 & 0 \\
            0 & -1 & 0 \\
            0 & 0 & 0
        \end{pmatrix},
\end{equation}
\begin{equation}
\nonumber
    \lambda_{4} = \frac{1}{\sqrt{2}} \begin{pmatrix}
            0 & 0 & 1 \\
            0 & 0 & 0 \\
            1 & 0 & 0
        \end{pmatrix}, \quad
    \lambda_{5} = \frac{1}{\sqrt{2}} \begin{pmatrix}
            0 & 0 & -i \\
            0 & 0 & 0 \\
            i & 0 & 0
        \end{pmatrix}, \quad
    \lambda_{6} = \frac{1}{\sqrt{2}} \begin{pmatrix}
            0 & 0 & 0 \\
            0 & 0 & 1 \\
            0 & 1 & 0
        \end{pmatrix},
\end{equation}
\begin{equation}
\nonumber
    \lambda_{7} = \frac{1}{\sqrt{2}} \begin{pmatrix}
            0 & 0 & 0 \\
            0 & 0 & -i \\
            0 & i & 0
        \end{pmatrix}, \quad
    \lambda_{8} = \frac{1}{\sqrt{6}} \begin{pmatrix}
            1 & 0 & 0 \\
            0 & 1 & 0 \\
            0 & 0 & -2
        \end{pmatrix}
\end{equation}
and $\lambda_{0} = \frac{1}{\sqrt{3}} \mathbf{1}_{3}$;
which is an orthonormal basis for $M_{3}$ with respect to the HS inner product.
Any self-adjoint matrix $A = A^{*} \in M_{3}$ can be written as
$A = \lambda(a)$, where
$\lambda(a) = \sum_{\mu=0}^{8} a_{\mu} \lambda_{\mu}$,
and
$a = (a_{0}, a_{1}, \ldots, a_{8}) \in \mathbb{R}^{9}$.
Sometimes, we will use a simplified notation:
$a = (a_{0}, \vec{a})$,
where $\vec{a} = (a_{1},\ldots,a_{8}) \in \mathbb{R}^{8}$,
and $\lambda(a) = \lambda(a_{0}, \vec{a}) = a_{0} \lambda_{0} + \vec{a} \cdot \vec{\lambda}$.
For $a = (a_{0}, \vec{a})$ and $b = (b_{0}, \vec{b})$,
we have
$\langle \lambda(a), \lambda(b) \rangle_{HS} = a_{0} b_{0} + \vec{a} \cdot \vec{b}$.
For a bistochastic map $S$,
let us define a matrix $x \in M_{8}(\mathbb{R})$ by
$x_{ij} = \text{tr} \lambda_{i} S(\lambda_{j})$,
$i,j = 1,2,\ldots,8$.
Then it is easy to see that $S$ acts on a self-adjoint matrix by
\begin{equation}
\label{eq:isomoprhism}
S(\lambda(a_{0}, \vec{a})) = \lambda(a_{0}, x \vec{a}).
\end{equation}
For a matrix $x \in M_{8}(\mathbb{R})$,
we denote a linear map on $M_{3}$ that preserves both the identity and trace,
defined by the relation \eqref{eq:isomoprhism},
by $S_{x}$.
Let $\Lambda \subset M_{8}(\mathbb{R})$ be a set of those real matrices $x$,
for which $S_{x}$ is a bistochastic map:
$\Lambda = \left \{ x \in M_{8}(\mathbb{R}) \: |  \: S_{x} \geq 0 \right \}$.
Therefore, there is a one-to-one correspondence between $\Lambda$
and the set of bistochastic maps on $M_{3}$.
The mapping $S_{x} \mapsto x$ is in fact a semigroup isomorphism,
for which $S_{x}^{*} = S_{x^{t}}$,
where generally $S^{*}$ denotes the HS-adjoint map of $S$,
given by:
$\text{tr} \, S^{*}(A) B = \text{tr} A S(B)$,
for all $A, B \in M_{3}$.
The structure of the set $\Lambda$ is fairly complicated.
First, it is a closed convex set, i.e. for any $x,y \in \Lambda$,
$\lambda x + (1-\lambda)y \in \Lambda$,
for $0 \leq \lambda \leq 1$.
Moreover, is is a compact, convex and real topological semigroup
(see e.g. \cite{schwarz1955hausdorff,chow1975compact} for the definitions)
with involution being the matrix transposition: $A \mapsto A^{t}$.
Obviously,
$S_{\mathbf{1}_{8}} = I$.
The structure of the set analogous to $\Lambda$,
but representing maps on the algebra $M_{2}$,
has been studied using geometrical methods in \cite{miller2015topology}.
Because there is no such geometrical identification for $n = 3$,
this time we exploit the semigroup aspect of the set $\Lambda$.

\begin{proposition}
\label{prop:LambdaAndBalls}
    Let $\overline{K}_{r}(\hat{0})$ denote a closed
    ball in $M_{8}(\mathbb{R})$ with respect to the operator norm,
    centred around $\hat{0}$, with radius $r > 0$.
    Then
    \begin{equation}
        \overline{K}_{\frac{1}{2}}(\hat{0}) \subset \Lambda \subset \overline{K}_{1}(\hat{0}).
    \end{equation}
\end{proposition}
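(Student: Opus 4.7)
The plan is to prove the two inclusions separately, both pivoting on the parametrisation $A = \alpha\mathbf{1}_{3} + B$ with $\alpha \in \mathbb{R}$ and $B = \vec{a}\cdot\vec{\lambda}$ traceless self-adjoint. Then $S_{x}(A) = \alpha\mathbf{1}_{3} + B'$ with $B' = (x\vec{a})\cdot\vec{\lambda}$, and positivity reads $A \geq 0 \iff \alpha \geq \|B_{-}\|$, $S_{x}(A) \geq 0 \iff \alpha \geq \|B'_{-}\|$, where $B_{-}$, $B'_{-}$ denote the negative parts. The geometric core is the sharp two-sided estimate
$$
\tfrac{1}{\sqrt{6}}\,\|B\|_{HS} \,\leq\, \|B_{-}\| \,\leq\, \sqrt{\tfrac{2}{3}}\,\|B\|_{HS}
$$
valid for every traceless self-adjoint $B \in M_{3}$, obtained by optimising over $\{(\beta_{1},\beta_{2},\beta_{3}) : \sum\beta_{i}=0,\ \sum\beta_{i}^{2}=1\}$; the lower bound is saturated by spectra of the form $(2s,-s,-s)$ and the upper one by $(-2s,s,s)$. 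Crucially, the product of the two constants is $\sqrt{6}\cdot\sqrt{2/3}=2$, which is what forces the factor $1/2$ in the statement.

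For $\overline{K}_{1/2}(\hat{0}) \subset \Lambda$, assuming $\|x\|\leq 1/2$ and $A = \alpha\mathbf{1}_{3}+B \geq 0$, the chain
$$
\|B'_{-}\| \,\leq\, \sqrt{\tfrac{2}{3}}\,\|B'\|_{HS} \,=\, \sqrt{\tfrac{2}{3}}\,|x\vec{a}| \,\leq\, \sqrt{\tfrac{2}{3}}\,\|x\|\,\|B\|_{HS} \,\leq\, 2\|x\|\,\|B_{-}\| \,\leq\, \|B_{-}\| \,\leq\, \alpha
$$
yields $S_{x}(A) \geq 0$, hence $x \in \Lambda$. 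For $\Lambda \subset \overline{K}_{1}(\hat{0})$, I invoke Kadison's inequality: because $t\mapsto t^{2}$ is operator convex, every unital positive linear map $\Phi$ on $M_{3}$ satisfies $\Phi(A)^{2} \leq \Phi(A^{2})$ for all self-adjoint $A$ (the classical proof restricts $\Phi$ to the commutative $C^{*}$-subalgebra generated by $A$, on which $\Phi$ is automatically completely positive, and then applies Kadison--Schwarz). Taking the trace and using trace preservation,
$$
\|S_{x}(A)\|_{HS}^{2} \,=\, \text{tr}\bigl(S_{x}(A)^{2}\bigr) \,\leq\, \text{tr}\bigl(S_{x}(A^{2})\bigr) \,=\, \text{tr}(A^{2}) \,=\, \|A\|_{HS}^{2};
$$
specialised to $A = \vec{a}\cdot\vec{\lambda}$ this reads $|x\vec{a}|\leq|\vec{a}|$ for every $\vec{a}$, i.e.\ $\|x\|\leq 1$.

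The main obstacle is the upper inclusion, since it depends on Kadison's inequality for merely positive (not necessarily $2$-positive) unital maps, a non-trivial classical result; a second viable route is Riesz--Thorin interpolation between the $\ell^{\infty}$ and $\ell^{1}$ Schatten-norm contractions automatically enjoyed by any positive bistochastic map. The lower inclusion is purely elementary, but its success hinges on identifying both extremal eigenvalue configurations so that the two numerical constants combine to exactly the factor $2$ that matches the radius $1/2$.
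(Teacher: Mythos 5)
Your argument is correct, and the two halves compare to the paper's proof as follows. For the upper inclusion $\Lambda \subset \overline{K}_{1}(\hat{0})$ you do exactly what the paper does: invoke the Kadison--Schwarz inequality $S(A)^{2} \leq S(A^{2})$ for unital positive maps (the paper cites Choi for it; your sketch via restriction to the commutative subalgebra generated by $A$ is the standard justification), take the trace, and use trace preservation to get $|x\vec{a}| \leq |\vec{a}|$. For the lower inclusion your route is genuinely different in presentation, though it rests on the same geometry. The paper tests positivity only on rank-one projections: writing $P_{\vec{n}} = \lambda(\tfrac{1}{\sqrt{3}},\vec{n})$ with $\|\vec{n}\|_{2} = \sqrt{2/3}$, it checks $\mathrm{tr}\,P_{\vec{m}}S_{x}(P_{\vec{n}}) = \tfrac{1}{3} + \langle\vec{m},x\vec{n}\rangle \geq \tfrac{1}{3} - \tfrac{2}{3}\|x\| \geq 0$. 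You instead work with a general positive $A = \alpha\mathbf{1}_{3}+B$ and the two-sided estimate $\tfrac{1}{\sqrt{6}}\|B\|_{HS} \leq \|B_{-}\| \leq \sqrt{2/3}\,\|B\|_{HS}$ for traceless self-adjoint $B$; both of your extremal spectra and the resulting constant $\sqrt{6}\cdot\sqrt{2/3}=2$ check out, and the chain of inequalities closes correctly. The two computations encode the same fact (the insphere and circumsphere radii of the Bloch body of $M_{3}$ differ by a factor of $2$), but the paper's version is shorter because testing against rank-one projections collapses your two extremal configurations into the single normalisation $\|\vec{m}\|_{2}=\|\vec{n}\|_{2}=\sqrt{2/3}$; your version has the mild advantage of making explicit which spectra are responsible for each of the two constants, and of not requiring the (easy but unstated) reduction of positivity of $S_{x}$ to its values on projections.
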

\begin{proof}
    Suppose that $x \in M_{8}(\mathbb{R})$ and $||x|| \leq \frac{1}{2}$.
    Let $\vec{m}, \vec{n} \in \mathbb{R}^{8}$ be such that
    $P_{\vec{m}} = \lambda(\frac{1}{\sqrt{3}}, \vec{n})$ and
    $P_{\vec{n}} = \lambda(\frac{1}{\sqrt{3}}, \vec{m})$
    are orthogonal projections in $M_{3}$.
    It is easy to check that in that case
    $|| \vec{m} ||_{2} = || \vec{n} ||_{2} = \sqrt{\frac{2}{3}}$,
    where $|| \cdot||_{2}$ denotes
    the standard Euclidean norm.
    Because $||x|| \leq \frac{1}{2}$, we have that
    \begin{equation}
        \text{tr} \, P_{\vec{m}} S_{x}(P_{\vec{n}}) =
        \frac{1}{3} + \langle \vec{m}, x \vec{n} \rangle \geq
        \frac{1}{3} - |\langle \vec{m}, x \vec{n} \rangle| \geq 0.
    \end{equation}
    Hence $S_{x}$ is positive, which means that $x \in \Lambda$.

    Let now $x \in \Lambda$.
    Then for any $A = A^{*} \in M_{3}$, $S_{x}$ fulfils the
    \emph{Kadison-Schwarz inequality}
    \cite{choi1980some}:
    \begin{equation}
     \label{eq:KSineq}
        S(A)^{2} \leq S(A^{2}).
    \end{equation}
    Let $\vec{n} \in \mathbb{R}^{8}$; since $\lambda(\frac{1}{\sqrt{3}}, \vec{n})$
    is self-adjoint, by \eqref{eq:KSineq} and by the fact that
    $S_{x}$ preserves trace,
    we have
    \begin{equation}
        \frac{1}{3} + || x \vec{n} ||_{2}^{2} = \text{tr}\,
        \left ( S_{x}(\lambda(\frac{1}{\sqrt{3}}, \vec{n}) \right)^{2} \leq
        \text{tr}\, S \left( \lambda(\frac{1}{\sqrt{3}}, \vec{n})^{2} \right) =
        \text{tr}\, \lambda(\frac{1}{\sqrt{3}}, \vec{n})^{2} =
        \frac{1}{3} + ||\vec{n}||_{2}^{2}.
    \end{equation}
Hence, $||x\vec{n}||_{2} \leq ||\vec{n}||_{2}$, for any $\vec{n} \in \mathbb{R}^{8}$,
which means that $x \in \overline{K}_{1}(\hat{0})$.
\end{proof}

Let $P_{8} \in M_{8}(\mathbb{R})$ denote a diagonal matrix
$P_{8} = \text{diag}(0,0,0,0,0,0,0,1)$;
and so on for other sets of indices:
$P_{38} = \text{diag}(0,0,1,0,0,0,0,1)$, etc.
It is easy to check that all matrices:
$P_{8}, P_{38}, P_{138}, P_{1238}, P_{13468}$ are idempotent elements
belonging to $\Lambda$.
In addition, there are no idempotent elements in $\Lambda$ of rank 6 or 7,
and the only idempotent of rank 8 is the identity $\mathbf{1}_{8}$.
That fact will become clear in the light of the proof of
Theorem \ref{thm:Idempotents} below (see Remark \ref{rem:noRank6or7}).
Let also $G_{3} = \text{Ad} \, \text{SU}(3) \subset \Lambda$ denotes
the group of those matrices $g \in \Lambda$ such that
$S_{g}$ is an automorphism: $S_{g}(A) = U A U^{*}$,
for a unitary matrix $U \in \text{SU}(3)$ and for any $A \in M_{3}$.
It is evident that $G_{3} \subset \text{SO}(8)$,
the special group of orthogonal matrices.
The set of those $x \in \Lambda$ such that $S_{x}$ is a bistochastic map,
extremal in the set
of all positive maps on $M_{3}$ (not necessarily bistochastic),
is denoted by
$\text{Ext}_{0}(\Lambda)$,
whereas the set of extremal points of the convex set $\Lambda$ is labelled as
$\text{Ext}(\Lambda)$.
The matrix $x \in \text{Ext}(\Lambda)$,
if and only if for all
$y_{1}, y_{2} \in \Lambda$ and $0 \leq \lambda \leq 1$,
if $x = \lambda y_{1} + (1-\lambda) y_{2}$,
then $y_{1} = y_{2} = x$.
It is true that $\text{Ext}_{0}(\Lambda) \subset \text{Ext}(\Lambda)$.
The next fact follows from Proposition \ref{prop:LambdaAndBalls}.

\begin{proposition}
    \label{prop:oneHalfofOrthogonal}
    Let $x \in \mathrm{Ext}(\Lambda)$.
    Suppose that $||x|| = \frac{1}{2}$;
    then $ x = \frac{1}{2} R$,  for $R \in \mathrm{O}(8)$.
\end{proposition}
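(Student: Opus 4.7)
The plan is to combine the inclusion $\overline{K}_{1/2}(\hat{0})\subset\Lambda$ established in Proposition \ref{prop:LambdaAndBalls} with a singular value decomposition argument and the definition of extremality.

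The starting observation is that $\|x\|=\tfrac12$ places $x$ on the boundary of the operator-norm ball $\overline{K}_{1/2}(\hat{0})$, and this entire ball sits inside $\Lambda$. So any perturbation of $x$ which stays within the ball of radius $\tfrac12$ is automatically admissible for testing extremality of $x$ inside $\Lambda$. The claim that $2x$ is orthogonal amounts to saying that all singular values of $x$ equal $\tfrac12$, so the strategy is to rule out, via an extremality-based perturbation, the existence of any singular value strictly smaller than $\tfrac12$.

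Concretely, I would write $x=U\Sigma V^{t}$ with $U,V\in\mathrm{O}(8)$ and $\Sigma=\mathrm{diag}(\sigma_{1},\ldots,\sigma_{8})$, $\sigma_{1}\geq\cdots\geq\sigma_{8}\geq 0$. Since $\|x\|=\sigma_{1}=\tfrac12$, we have $\sigma_{k}\leq\tfrac12$ for all $k$. Supposing for contradiction that $\sigma_{k}<\tfrac12$ for some index $k$, pick $\varepsilon$ with $0<\varepsilon\leq\tfrac12-\sigma_{k}$ and define $\Sigma_{\pm}$ to agree with $\Sigma$ except at the $k$-th diagonal entry, where $(\Sigma_{\pm})_{kk}=\sigma_{k}\pm\varepsilon$. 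Put $x_{\pm}=U\Sigma_{\pm}V^{t}$. By construction the operator norm of $x_{\pm}$ equals $\max(\sigma_{1},\sigma_{k}+\varepsilon)=\tfrac12$, hence $x_{\pm}\in\overline{K}_{1/2}(\hat{0})\subset\Lambda$ by Proposition \ref{prop:LambdaAndBalls}. But $x=\tfrac12(x_{+}+x_{-})$ with $x_{+}\neq x_{-}$, contradicting $x\in\mathrm{Ext}(\Lambda)$.

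Therefore every singular value of $x$ equals $\tfrac12$, which means that $R:=2x=U(2\Sigma)V^{t}=UV^{t}$ belongs to $\mathrm{O}(8)$, and $x=\tfrac12 R$ as required. There is no serious obstacle here; the only point needing a little care is verifying that the perturbation $x_{\pm}$ remains inside $\Lambda$, and that is handled precisely by the left inclusion of Proposition \ref{prop:LambdaAndBalls}, which is the whole reason extremality in $\Lambda$ forces $x$ (scaled up) to be an extreme point of the operator-norm ball of $M_{8}(\mathbb{R})$, i.e.\ an orthogonal matrix.
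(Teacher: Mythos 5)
Your proof is correct and is essentially the same argument as the paper's: both rest on the inclusion $\overline{K}_{1/2}(\hat{0})\subset\Lambda$ from Proposition \ref{prop:LambdaAndBalls} and exhibit $x$ as the midpoint of two distinct elements of norm at most $\tfrac12$ obtained by perturbing its modulus (the paper writes $x=R|x|$ and splits $|x|$ as the average of $\tfrac12\mathbf{1}_{8}$ and $2|x|-\tfrac12\mathbf{1}_{8}$, while you perturb a single singular value in the SVD). The only cosmetic slip is that the norm of $x_{-}$ is $\max(\sigma_{1},|\sigma_{k}-\varepsilon|)$ rather than $\max(\sigma_{1},\sigma_{k}+\varepsilon)$, but it is still at most $\tfrac12$, which is all the argument needs.
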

\begin{proof}
    Let $x \in \mathrm{Ext}(\Lambda)$ and $||x|| = \frac{1}{2}$.
    Let $x = R |x|$ be the polar decomposition of $x$,
    $R \in \text{O}(8)$, $|x| = \sqrt{x^{t} x}$.
    Suppose that $|x| \neq \frac{1}{2} \mathbf{1}_{8}$.
    Let $y_{1} = \frac{1}{2} \mathbf{1}_{8}$ and
    $y_{2} = 2 |x| - \frac{1}{2} \mathbf{1}_{8}$.
    Then $- \frac{1}{2} \mathbf{1}_{8} \leq y_{2} \leq \frac{1}{2} \mathbf{1}_{8}$,
    so $|| y_{2} || \leq \frac{1}{2}$
    and by Proposition \ref{prop:LambdaAndBalls},
    we have that both $R y_{1}, R y_{2} \in \Lambda$.
    Then $x = \frac{1}{2} R y_{1} + \frac{1}{2} R y_{2}$,
    and thus it cannot be extremal, a contradiction.
    Hence, $x = \frac{1}{2} R$.
\end{proof}

\begin{remark}
    By Proposition \ref{prop:LambdaAndBalls},
    we see that $x \in \text{Ext}(\Lambda)$ implies that
    $||x|| \geq \frac{1}{2}$.
    Therefore, $x = \frac{1}{2} R$, $R \in \mathrm{O}(8)$,
    are the only possible elements in $\text{Ext}(\Lambda)$,
    and so in $\text{Ext}_{0}(\Lambda)$,
    with the norm $||x|| = \frac{1}{2}$.
\end{remark}


\section{Idempotent and extremal elements of $\Lambda$}
\label{sec:main}

Let $x \in \Lambda$ and let
$\langle x \rangle \subset \Lambda$ be the semigroup generated by $x$:
$\langle x \rangle = \left \{ x^{k} \: | \: k \in \mathbb{N}, k \geq 1 \right \}$.
By $\overline{\langle x \rangle}$,
we denote the closure of $\langle x \rangle$ in $M_{8}(\mathbb{R})$.
The proof of the following proposition is presented in
    \cite[Lemma 3]{schwarz1955hausdorff}.

\begin{proposition}
    \label{prop:UniqeClusterPoint}
    Since $\Lambda$ is closed, $\overline{\langle x \rangle} \subset \Lambda$.
    The set $\overline{\langle x \rangle}$ contains a unique idempotent,
        denoted by $e_{x}$.
\end{proposition}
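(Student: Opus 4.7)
The first assertion $\overline{\langle x\rangle}\subset\Lambda$ is immediate from the fact that $\Lambda$ is a closed topological semigroup (as recorded in the preliminaries): the powers $\{x^{k}:k\ge 1\}$ all lie in $\Lambda$, and $\Lambda$ being topologically closed absorbs their limits. The substance of the proposition is therefore the existence and uniqueness of an idempotent inside the compact commutative subsemigroup $T:=\overline{\langle x\rangle}$ (compact as a closed subset of the compact set $\Lambda$; commutative by continuity of multiplication and the fact that all powers of $x$ commute).

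For \textbf{existence} I would run the classical minimal-subsemigroup argument. The family of nonempty closed subsemigroups of $T$ is partially ordered by reverse inclusion; since $T$ is compact, descending chains have nonempty intersection (finite intersection property), and Zorn's lemma supplies a minimal element $T_{0}\subset T$. Pick any $t\in T_{0}$. The set $tT_{0}$ is a nonempty closed subsemigroup of $T_{0}$ (closure under multiplication uses commutativity), so by minimality $tT_{0}=T_{0}$, yielding some $s\in T_{0}$ with $ts=t$. The set $\{s\in T_{0}:ts=t\}$ is itself a nonempty closed subsemigroup and so equals $T_{0}$, which forces $tr=t$ for every $r\in T_{0}$; in particular $t^{2}=t$, so every element of $T_{0}$ is idempotent. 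Repeating the argument with the roles of $s$ and $t$ swapped and invoking commutativity gives $s=ts=st=t$ for any $s,t\in T_{0}$, collapsing $T_{0}$ to a single idempotent, a natural candidate for $e_{x}$.

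For \textbf{uniqueness} I would switch to the Jordan form of $x$. By Proposition~\ref{prop:LambdaAndBalls}, $||x||\le 1$, hence $|\lambda|\le 1$ for every eigenvalue $\lambda$ of $x$, and every Jordan block at a unimodular $\lambda$ must be $1\times 1$ (otherwise $||x^{n}||\to\infty$, contradicting $||x^{n}||\le 1$). Consequently
\[
    x^{n}=\sum_{|\lambda|=1}\lambda^{n}P_{\lambda}+R_{n},\qquad R_{n}\to 0,
\]
where the $P_{\lambda}$ are the spectral projections onto the peripheral eigenspaces. If $e\in T$ is any idempotent, write $e=\lim_{k}x^{n_{k}}$ and, after passing to a subsequence along which $\lambda^{n_{k}}\to\mu_{\lambda}$ for each of the finitely many peripheral eigenvalues, obtain $e=\sum_{|\lambda|=1}\mu_{\lambda}P_{\lambda}$. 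The relation $e^{2}=e$ combined with $|\mu_{\lambda}|=1$ forces $\mu_{\lambda}=1$ for every $\lambda$, so $e$ coincides with $P:=\sum_{|\lambda|=1}P_{\lambda}$, a quantity that does not depend on the chosen idempotent. Kronecker's density theorem (the orbit closure of $(\lambda^{n})_{\lambda}$ in the finite torus is a closed subgroup and so contains the neutral element) confirms that $P$ is genuinely a cluster point of $(x^{n})$, so $P\in T$ and agrees with the $e_{x}$ produced above.

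The main obstacle is \emph{uniqueness} rather than existence: compactness and commutativity alone do not force a unique idempotent (take $\{0,1\}\subset\mathbb{R}$ under multiplication), so the argument has to exploit that $T$ is singly generated by $x$. The spectral route above accomplishes this cleanly in the finite-dimensional setting, but it requires one to check carefully both the triviality of Jordan blocks on the unit circle and, via Kronecker's theorem, that the candidate projection $P$ really does sit in $\overline{\langle x\rangle}$.
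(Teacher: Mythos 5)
Your proof is correct, but it is genuinely different from what the paper does: the paper does not prove this statement at all, it simply invokes Lemma~3 of Schwarz's general theory of compact Hausdorff semigroups, where both existence and uniqueness of the idempotent in $\overline{\langle x\rangle}$ are established by purely topological--algebraic means. Your existence argument (Zorn's lemma on closed subsemigroups, $tT_{0}=T_{0}$ by minimality, collapse of $T_{0}$ to a single idempotent) is essentially the standard kernel argument that underlies the cited lemma, so there you are reproving the general fact. Where you genuinely diverge is uniqueness: instead of the semigroup-theoretic argument (which must exploit that $T$ is monothetic, since, as you rightly observe, commutativity and compactness alone do not preclude several idempotents), you use the finite-dimensional spectral structure --- the bound $\|x\|\le 1$ from Proposition~\ref{prop:LambdaAndBalls} forces trivial Jordan blocks on the peripheral spectrum, every idempotent cluster point must equal $\sum_{|\lambda|=1}P_{\lambda}$, and Kronecker's theorem (or the fact that a closed subsemigroup of a compact group is a group) puts that projection inside $\overline{\langle x\rangle}$. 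This buys something the citation does not: an explicit identification $e_{x}=\sum_{|\lambda|=1}P_{\lambda}$, the spectral projection onto the peripheral generalized eigenspaces, which is concretely computable; the cost is that your argument is tied to the linear finite-dimensional setting, whereas Schwarz's lemma holds for arbitrary compact topological semigroups. Both routes are sound; your linear-algebraic details (linear independence of the $P_{\lambda}$ to extract $\mu_{\lambda}^{2}=\mu_{\lambda}$, hence $\mu_{\lambda}=1$) check out.
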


\begin{definition}
    For the set $\Lambda$, we define the following subsets:
    \begin{enumerate}
       \item the set of idempotents of $\Lambda$:
            $\mathcal{E}(\Lambda) = \left \{ e \in \Lambda \: |  \: e^{2} = e \right \}$;
       \item the set of nilpotent elements:
            $\mathcal{N}(\Lambda) = \left \{ x \in \Lambda \: |  \: \lim \limits_{n \rightarrow \infty} x^{n} = \hat{0} \right \}$;
       \item the group of invertible elements:
            $\mathcal{G}(\Lambda) = \left \{ x \in \Lambda \: |  \: \exists \, y \in \Lambda, xy = yx = \mathbf{1}_{8} \right \}$.
    \end{enumerate}
   For an idempotent element $e \in \mathcal{E}(\Lambda)$,
   we define the following subsets of $\Lambda$:
   \begin{enumerate}
        \item let $H(e)$ be the maximal subgroup of $\Lambda$
            containing $e \in \mathcal{E}(\Lambda)$;
        \item $Q(e) = \left \{ x \in \Lambda \: | \: e_{x} = e \right \}$,
        where $e_{x}$ is the unique idempotent element of $\overline{\langle x \rangle}$.
   \end{enumerate}
\end{definition}

\begin{remark}
\label{rem:GOfLambda}
It is a known fact from the semigroup theory that for each idempotent
element of a semigroup, there is exactly one maximal subgroup containing it.
In particular, in our case of the semigroup $\Lambda$,
$H(\hat{0}) = \left \{ \hat{0} \right \}$,
$H(\mathbf{1}_{8}) = \mathcal{G}(\Lambda)$,
obviously,
and moreover $H(P_{8}) = \left \{ P_{8} \right \}$.
Indeed,
if the matrix elements of $x$ are $(x_{ij})_{i,j=1}^{8}$,
then from the fact that $x \in H(P_{8})$, follows
$x = x P_{8} = P_{8} x$
($x$ belongs to the maximal group containing $P_{8}$,
and the idempotent $P_{8}$ is the identity for that group).
Hence, $x = P_{8} \, x P_{8} = x_{88} P_{8}$.
Because there is a sequence of natural numbers $n_{k}$, $k=1,2,3,\ldots$,
such that $x^{n_{k}} \overset{k}{\rightarrow} P_{8}$, so
$|x_{88}| = 1$. But $- P_{8} \notin \Lambda$, and thus $x = P_{8}$.
\end{remark}

The following proposition, proven in general terms in
\cite[Theorem 8]{schwarz1955hausdorff},
will be repeatedly used in the subsequent reasoning.
\begin{proposition}
    \label{prop:SchwarzSemigroups}
    Let $e \in  \mathcal{E}(\Lambda)$.
    Then $H(e) = e \, Q(e) = Q(e) \, e$.
\end{proposition}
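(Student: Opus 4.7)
The plan is to establish $H(e) = eQ(e)$ by double inclusion; the equality $H(e) = Q(e)e$ then follows by an entirely symmetric argument, interchanging left- and right-multiplications throughout (alternatively, by applying the result to $e^{t} \in \mathcal{E}(\Lambda)$ and transposing, since $\Lambda$, $\mathcal{E}(\Lambda)$, $H(\cdot)$, and $Q(\cdot)$ are all preserved by the involution $x \mapsto x^{t}$).

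For $H(e) \subseteq eQ(e)$, take $x \in H(e)$. Since $e$ is the identity of this group, $x = ex$, so it remains to prove $e_{x} = e$. By Proposition \ref{prop:UniqeClusterPoint} I can choose $n_{k} \uparrow \infty$ with $x^{n_{k}} \to e_{x}$; because $\Lambda$ is compact and $x^{-n_{k}} \in H(e) \subset \Lambda$ (with $x^{-1}$ the group inverse of $x$), I may refine to a subsequence along which $x^{-n_{k}} \to w \in \Lambda$. Continuity of multiplication then yields $e_{x} w = \lim x^{n_{k}} x^{-n_{k}} = e$, and taking limits of the relations $e x^{\pm m} = x^{\pm m} e = x^{\pm m}$ (valid inside $H(e)$) gives $e\, e_{x} = e_{x} e = e_{x}$. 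Hence
\[
 e_{x} \;=\; e_{x} e \;=\; e_{x}(e_{x} w) \;=\; e_{x} w \;=\; e,
\]
so $x \in Q(e)$ and $x = ex \in eQ(e)$.

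For $eQ(e) \subseteq H(e)$, fix $x \in Q(e)$. Proposition \ref{prop:UniqeClusterPoint} yields $n_{k} \uparrow \infty$ with $x^{n_{k}} \to e$, and compactness supplies a further subsequence along which $x^{n_{k}-1} \to y \in \Lambda$; continuity then gives $xy = yx = e$. Since $e$ is a limit of polynomials in $x$, it commutes with both $x$ and $y$. The subset $e\Lambda e$ is a sub-semigroup of $\Lambda$ with two-sided identity $e$, and both $ex$ and $ey$ lie in it ($exe = ex$ and $eye = ey$ by the said commutativity). Short calculations give $(ex)(ey) = e^{2} xy = e$ and $(ey)(ex) = e$, so $ex$ is invertible in $e\Lambda e$ with inverse $ey$; thus $\{e\} \cup \{(ex)^{n},(ey)^{n} : n \geq 1\}$ is a subgroup of $\Lambda$ containing $e$, and by maximality of $H(e)$ it sits inside $H(e)$, so $ex \in H(e)$.

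The main obstacle is the first inclusion: the naive slogan ``a compact group has exactly one idempotent'' does not apply directly, because $H(e)$ need not itself be closed in $\Lambda$, so the equality $e_{x} = e$ must be extracted via a compactness/continuity dance involving both $x^{n_{k}}$ and $x^{-n_{k}}$ together with the commutation relations inside $H(e)$. The same commutativity of $e$ with the $x$-orbit is what lets $ex$ and $ey$ land in the corner $e\Lambda e$ in the second inclusion, so this small observation is the technical glue binding the argument together.
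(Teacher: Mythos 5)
Your proof is correct, but it takes a different route from the paper: the paper's entire proof is a one-line citation, observing that $\Lambda$ is a compact topological semigroup and invoking \cite[Theorem 8]{schwarz1955hausdorff}, whereas you give a self-contained verification of that theorem in this concrete matrix setting. Both inclusions check out. For $H(e)\subseteq eQ(e)$ your compactness argument (extracting limits $e_{x}$ of $x^{n_k}$ and $w$ of $x^{-n_k}$, then computing $e_{x}=e_{x}e=e_{x}(e_{x}w)=e_{x}w=e$) is sound, and your caution is warranted in general, though in a compact semigroup one can also show the closure of $H(e)$ is again a group with identity $e$, which gives $e_x=e$ slightly faster. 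For $eQ(e)\subseteq H(e)$, producing the inverse $ey=\lim x^{n_k-1}$ inside the corner $e\Lambda e$ and appealing to maximality of $H(e)$ is exactly the standard semigroup-theoretic mechanism. One small simplification you could have noted: since $e=\lim x^{n_k}$ commutes with $x$ for every $x\in Q(e)$, the identity $eQ(e)=Q(e)e$ is immediate, so the second equality needs no separate symmetric argument or transposition (the transposition route also works, and correctly avoids assuming $e=e^{t}$, which is only proved later in the paper). The trade-off is the usual one: the citation keeps the paper short and places the result in its general context, while your argument makes the paper self-contained and shows that only joint continuity of multiplication and compactness of $\Lambda$ are really used.
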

\begin{proof}
    Observe that the semigroup $\Lambda$ fulfils the assumptions
    of \cite[Theorem 8]{schwarz1955hausdorff}.
\end{proof}

\begin{remark}
    \label{rem:Qof1}
It is true that $Q(\hat{0}) = \mathcal{N}(\Lambda)$.
Indeed, let $x \in Q(\hat{0})$,
then by definition, there is a sequence of natural numbers $n_{k}$ such that
$x^{n_{k}} \overset{k}{\rightarrow} \hat{0}$.
For some fixed $k$, for any $l \geq n_{k}$,
since $||x|| \leq 1 $
(see Proposition \ref{prop:LambdaAndBalls}),
one has
$||x^{l}|| \leq ||x^{l - n_{k}}|| \, ||x^{n_{k}}|| \leq ||x^{n_{k}}||$.
Hence $x^{l} \overset{l}{\rightarrow} \hat{0}$
and $Q(\hat{0}) \subset \mathcal{N}(\Lambda)$.
The reverse inclusion is obvious.
Moreover, since we have that $H(\mathbf{1}_{8}) = \mathcal{G}(\Lambda)$,
    by Proposition \ref{prop:SchwarzSemigroups},
    $H(\mathbf{1}_{8}) = Q(\mathbf{1}_{8}) \cdot \mathbf{1}_{8}$,
    and hence $Q(\mathbf{1}_{8}) = \mathcal{G}(\Lambda)$.
\end{remark}

Our task is to analyse the family of sets
$\left \{ Q(e), e \in \mathcal{E}(\Lambda) \right \}$,
and in particular,
to show, where among this family the elements of
$\text{Ext}_{0}(\Lambda)$ are to be found.
To this end, we present the following series of results.

\begin{proposition}
    For $e_{1}, e_{2} \in \mathcal{E}(\Lambda)$,
    $e_{1} \neq e_{2}$,
    the sets $Q(e_{1})$ and $Q(e_{2})$ are disjoint.
    Moreover,
    $\Lambda = \! \bigcup \limits_{e \in \mathcal{E}(\Lambda)} \! Q(e)$.
\end{proposition}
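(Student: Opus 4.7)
The plan is to derive both claims directly from Proposition \ref{prop:UniqeClusterPoint}, which guarantees that for every $x \in \Lambda$ the closure $\overline{\langle x \rangle}$ contains \emph{exactly one} idempotent $e_{x}$. Since the assignment $x \mapsto e_{x}$ is therefore a well-defined function $\Lambda \to \mathcal{E}(\Lambda)$, the sets $Q(e) = \{x \in \Lambda : e_{x} = e\}$ are precisely the fibres of this function, and the proposition amounts to saying that these fibres partition $\Lambda$.

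For disjointness, I would argue by contradiction: suppose $x \in Q(e_{1}) \cap Q(e_{2})$. By the definition of $Q$, we then have $e_{x} = e_{1}$ and simultaneously $e_{x} = e_{2}$. The uniqueness clause in Proposition \ref{prop:UniqeClusterPoint} forces $e_{1} = e_{2}$, contradicting the hypothesis. Hence $Q(e_{1}) \cap Q(e_{2}) = \emptyset$.

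For the covering statement, note that $Q(e) \subseteq \Lambda$ for every $e \in \mathcal{E}(\Lambda)$ by definition, so $\bigcup_{e} Q(e) \subseteq \Lambda$. Conversely, let $x \in \Lambda$. Proposition \ref{prop:UniqeClusterPoint} supplies an idempotent $e_{x} \in \overline{\langle x \rangle} \subseteq \Lambda$, so $e_{x} \in \mathcal{E}(\Lambda)$ and by definition $x \in Q(e_{x})$. Therefore $\Lambda \subseteq \bigcup_{e \in \mathcal{E}(\Lambda)} Q(e)$, which completes the proof.

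No real obstacle arises here; the statement is essentially a restatement of the well-definedness of the idempotent map $x \mapsto e_{x}$ furnished by Schwarz's result. The only thing worth double-checking is that we are allowed to index the union over $\mathcal{E}(\Lambda)$ rather than over some smaller set, but since $e_{x}$ is always an idempotent lying in $\Lambda$ this is automatic.
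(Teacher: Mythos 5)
Your proof is correct and follows the same route as the paper: both arguments reduce the proposition to the existence and uniqueness of the idempotent $e_{x} \in \overline{\langle x \rangle}$ guaranteed by Proposition \ref{prop:UniqeClusterPoint}, so that the sets $Q(e)$ are exactly the fibres of the map $x \mapsto e_{x}$. You simply spell out the details that the paper leaves implicit.
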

\begin{proof}
    If $e_{1} \neq e_{2}$, by Proposition \ref{prop:UniqeClusterPoint},
    the sets $Q(e_{1})$ and $Q(e_{2})$ must be disjoint.
    For $x \in \Lambda$, we have that $x \in Q(e_{x}) \subset \Lambda$,
    and the assertion follows.
\end{proof}

It should be noted that for $g \in \mathcal{G}(\Lambda)$,
because $||g|| \leq 1$ (see Proposition \ref{prop:LambdaAndBalls}),
and $g^{-1}  \in \Lambda$, then $|| g || = 1$.
Thus, $g$ is an orthogonal matrix.
In particular,
for any $e \in \mathcal{E}(\Lambda)$,
$g e g^{t} \in \mathcal{E}(\Lambda)$.

\begin{lemma}
    \label{lem:equivClassesOfQe}
    Let $e \in \mathcal{E}(\Lambda)$ and $g \in \mathcal{G}(\Lambda)$.
    Then $Q(g e g^{t}) = g Q(e) g^{t} =
    \left \{ g x g^{t} \: | \: x \in Q(e) \right \}$.
\end{lemma}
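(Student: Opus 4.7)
The plan is to establish the identity by showing that conjugation by $g$ is an algebraic and topological automorphism of $\Lambda$ that preserves the unique idempotent in the closed cyclic semigroup.

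First I would record the key fact already noted in the excerpt: since $g \in \mathcal{G}(\Lambda)$ has $\|g\|=1$ and lies in $\mathrm{O}(8)$, we have $g^{-1} = g^{t}$, and $g^{t} \in \Lambda$ as well. Consequently the map $\varphi_{g}\!: x \mapsto g x g^{t}$ sends $\Lambda$ into itself (it is a product of three elements of the semigroup $\Lambda$), is continuous, and has continuous inverse $\varphi_{g^{t}}$. Therefore $\varphi_{g}$ is a homeomorphism of $\Lambda$ onto itself, and moreover a semigroup automorphism because $\varphi_{g}(xy) = gxg^{t} g y g^{t} = g(xy)g^{t} = \varphi_{g}(x)\varphi_{g}(y)$.

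Next I would compute $\overline{\langle g x g^{t}\rangle}$. Using $g^{t} g = \mathbf{1}_{8}$, a telescoping gives $(g x g^{t})^{k} = g x^{k} g^{t}$ for all $k \geq 1$, so $\langle g x g^{t}\rangle = g\langle x\rangle g^{t}$. Since $\varphi_{g}$ is a homeomorphism, taking closures yields $\overline{\langle g x g^{t}\rangle} = g \, \overline{\langle x\rangle}\, g^{t}$. The idempotents of the right-hand set are exactly the images under $\varphi_{g}$ of the idempotents of $\overline{\langle x\rangle}$: indeed, $(gyg^{t})^{2} = g y^{2} g^{t} = g y g^{t}$ iff $y^{2} = y$. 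By Proposition \ref{prop:UniqeClusterPoint}, $\overline{\langle x\rangle}$ has a unique idempotent $e_{x}$, so $\overline{\langle g x g^{t}\rangle}$ has the unique idempotent $g e_{x} g^{t}$; that is, $e_{g x g^{t}} = g e_{x} g^{t}$.

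From this identity the inclusion $\varphi_{g}(Q(e)) \subseteq Q(g e g^{t})$ is immediate: if $e_{x} = e$, then $e_{\varphi_{g}(x)} = g e g^{t}$. For the reverse inclusion, I would apply the same reasoning to $g^{t} \in \mathcal{G}(\Lambda)$ and the idempotent $g e g^{t}$, obtaining $\varphi_{g^{t}}(Q(g e g^{t})) \subseteq Q(g^{t}(g e g^{t})g) = Q(e)$, i.e.\ $Q(g e g^{t}) \subseteq \varphi_{g}(Q(e))$. The two inclusions yield the claimed equality.

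There is no real obstacle here; the only point that requires a moment of care is justifying the interchange of closure and the map $\varphi_{g}$, which is handled by its being a homeomorphism, and the observation that the idempotents of a conjugated set are exactly the conjugates of the idempotents of the original, both of which follow at once from $g^{t} g = \mathbf{1}_{8}$.
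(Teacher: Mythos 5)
Your proof is correct and follows essentially the same route as the paper: both arguments rest on the observation that $x \mapsto g x g^{t}$ is a semigroup homeomorphism of $\Lambda$ carrying $\overline{\langle x \rangle}$ onto $\overline{\langle g x g^{t} \rangle}$, so that the unique idempotent is conjugated accordingly, with the reverse inclusion obtained by applying the same argument to $g^{t}$. Nothing is missing.
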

\begin{proof}
    It is clear that $g \langle x \rangle g^{t} = \langle g x g^{t} \rangle$.
    Because the mapping $x \mapsto g x g^{t}$ is a homeomorphism,
    so $g \overline{\langle x \rangle} g^{t} = \overline{\langle g x g^{t} \rangle}$.
    Let $x \in Q(g e g^{t})$.
    Then $g e g^{t} \in \overline{\langle x \rangle}$.
    Hence $e \in g^{t} \overline{\langle x \rangle} g = \overline{\langle g^{t} x g \rangle}$.
    This in turn means that $g^{t} x g \in Q(e)$, i.e.
    $x \in g Q(e) g^{t}$, which establishes that
    $Q(g e g^{t}) \subset g Q(e) g^{t}$.
    The reverse inclusion follows from an analogous reasoning.
\end{proof}

Next, we describe the structure of the set of idempotents
in $\Lambda$.
For $e \in \mathcal{E}(\Lambda)$,
let $\mathcal{E}_{G_{3}}(e) = \left \{ g e g^{t} \: | \: g \in G_{3} \right \}$.
At first, in the following lemma, we recall a known fact
that an idempotent contractive operator on a Hilbert space
is an orthogonal projection
(see e.g. \cite[Problem 5.3.14]{abramovich2002problems}).
\begin{lemma}
    \label{lem:eIsProj}
    Let $e \in \mathcal{E}(\Lambda)$, then $e^{t} = e$, and hence,
    $e$ is an orthogonal projection in $M_{8}(\mathbb{R})$.
\end{lemma}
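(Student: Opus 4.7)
The plan is to deduce the lemma from two ingredients that are already available: the fact that every element of $\Lambda$ is a contraction (Proposition \ref{prop:LambdaAndBalls}, which gives $\Lambda \subset \overline{K}_1(\hat{0})$), and the elementary linear-algebraic decomposition $\mathbb{R}^{8} = \operatorname{ran} e \oplus \ker e$ associated with any idempotent. The content of the lemma is then the classical statement that an idempotent contraction on a Euclidean space is automatically self-adjoint, and I would simply reproduce this argument rather than invoke an external reference.

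First I would fix $e \in \mathcal{E}(\Lambda)$ and observe that $\|e\| \leq 1$, then write any $w \in \mathbb{R}^{8}$ uniquely as $w = v + u$ with $v = e w \in \operatorname{ran} e$ and $u = w - e w \in \ker e$ (using $e^{2} = e$). The goal reduces to showing $\operatorname{ran} e \perp \ker e$ in the Euclidean inner product, for then $e$ is the orthogonal projection onto $\operatorname{ran} e$ and in particular $e^{t} = e$.

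To prove the orthogonality, I would take arbitrary $v \in \operatorname{ran} e$ and $u \in \ker e$, consider the one-parameter family $v + t u$ with $t \in \mathbb{R}$, and apply $e$: since $e v = v$ and $e u = 0$, one gets $e(v + t u) = v$. The contractivity $\|e\| \leq 1$ then yields
\begin{equation}
\|v\|_{2}^{2} = \|e(v + t u)\|_{2}^{2} \leq \|v + t u\|_{2}^{2} = \|v\|_{2}^{2} + 2 t \langle v, u \rangle + t^{2} \|u\|_{2}^{2}
\end{equation}
for all $t \in \mathbb{R}$. The quadratic $t \mapsto 2 t \langle v, u \rangle + t^{2} \|u\|_{2}^{2}$ must therefore be non-negative on all of $\mathbb{R}$, which forces $\langle v, u \rangle = 0$ (otherwise a small $t$ of the appropriate sign would make the expression negative).

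This delivers $\operatorname{ran} e \perp \ker e$, hence $e$ is an orthogonal projection in $M_{8}(\mathbb{R})$ and in particular $e^{t} = e$. There is no real obstacle here; the only thing to be careful about is that the norm used in Proposition \ref{prop:LambdaAndBalls} is the operator norm induced by the Euclidean norm on $\mathbb{R}^{8}$, so the inequality $\|e w\|_{2} \leq \|w\|_{2}$ is indeed the contraction statement one needs, and the argument is entirely internal to $\mathbb{R}^{8}$ without reference back to $M_{3}$.
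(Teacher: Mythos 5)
Your proof is correct and is essentially the same argument the paper gives: both establish $\ker e \perp \operatorname{ran} e$ by applying the contraction property $\|e\|\leq 1$ to a one-parameter family mixing a kernel vector with a range vector and forcing the linear term of the resulting inequality to vanish. The only cosmetic difference is which of the two vectors carries the scalar parameter, and that the paper additionally notes $\|e\|=1$ (which is not actually needed for the orthogonality step).
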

\begin{proof}
    Because $e^{2} = e$, so $||e|| = ||e^{2}|| \leq ||e||^{2}$,
    hence $||e|| \geq 1$.
    Since $e \in \Lambda$, by Proposition \ref{prop:LambdaAndBalls}, $||e|| =1$.
    Suppose that $\vec{n} \in \ker e$ and $\vec{m} \in \mathcal{R}(e)$,
    the range of the operator $e$.
    Let $\alpha \in \mathbb{R}$.
    Then
\begin{equation}
    || \alpha \vec{m} ||^{2} = || P (\vec{n} + \alpha \vec{m}) ||^{2} \leq
    || \vec{n} + \alpha \vec{m}||^{2} \leq ||\vec{n}||^{2} +
        2 \alpha \langle \vec{n}, \vec{m} \rangle + || \alpha \vec{m}||^{2},
\end{equation}
i.e. $||\vec{n}||^{2} + 2 \alpha  \langle \vec{n}, \vec{m} \rangle \geq 0$,
for any $\alpha \in \mathbb{R}$.
It means that
$\langle \vec{n}, \vec{m} \rangle = 0$,
for every  $\vec{n} \in \ker e$ and $\vec{m} \in \mathcal{R}(e)$,
i.e.
$\ker e \perp \mathcal{R}(e)$, which proves that $e = e^{t}$.
\end{proof}

\begin{proposition}
\label{prop:htranspose}
Let $e \in \mathcal{E}(\Lambda)$ and $h \in H(e)$.
Then $h^{t} \in H(e)$, and $h^{t} h = h h^{t} = e$.
\end{proposition}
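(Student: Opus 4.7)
The plan is to exploit that $e$ acts as the group identity in $H(e)$, combined with the norm bound $\|x\|\le 1$ for every $x\in\Lambda$ coming from Proposition \ref{prop:LambdaAndBalls}, to force $h$ to behave as an orthogonal transformation on the range of $e$ and as zero on the kernel; once this geometric picture is in place, the equations $h^{t}h = hh^{t} = e$ are automatic, and $h^{t}\in H(e)$ follows from uniqueness of inverses in the group.

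First I would record the standard semigroup-theoretic facts about $H(e)$: since $H(e)$ is a group with identity $e$, there exists $h^{-1}\in H(e)\subset\Lambda$ with $hh^{-1}=h^{-1}h = e$, and moreover $eh = he = h$. The last identity implies that $h$ vanishes on $\ker e$ and maps $\mathcal{R}(e)$ into itself; similarly for $h^{-1}$. Proposition \ref{prop:LambdaAndBalls} gives $\|h\|\le 1$ and $\|h^{-1}\|\le 1$.

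Next I would argue that $h$ restricted to $\mathcal{R}(e)$ is an orthogonal operator. For any $v\in\mathcal{R}(e)$ the chain
\begin{equation}
\nonumber
\|v\| \;=\; \|ev\| \;=\; \|h^{-1}(hv)\| \;\le\; \|h^{-1}\|\,\|hv\| \;\le\; \|hv\| \;\le\; \|h\|\,\|v\| \;\le\; \|v\|
\end{equation}
forces $\|hv\|=\|v\|$, so $h|_{\mathcal{R}(e)}$ is an isometry; surjectivity onto $\mathcal{R}(e)$ follows from $hh^{-1}=e$. Because $\mathcal{R}(e)$ and $\ker e$ are mutually orthogonal (by Lemma \ref{lem:eIsProj}), an analogous argument applied to $h^{t}$ shows that $h^{t}$ likewise maps $\mathcal{R}(e)$ into $\mathcal{R}(e)$ and annihilates $\ker e$, and that $h^{t}|_{\mathcal{R}(e)}=(h|_{\mathcal{R}(e)})^{t}$.

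Then I would compute $h^{t}h$ and $hh^{t}$ block-wise with respect to $\mathbb{R}^{8}=\mathcal{R}(e)\oplus\ker e$: on $\ker e$ both products vanish because $h$ and $h^{t}$ do, and on $\mathcal{R}(e)$ they equal the identity on $\mathcal{R}(e)$ because $h|_{\mathcal{R}(e)}$ is orthogonal. Assembling the two blocks yields $h^{t}h = hh^{t} = e$. Finally, since the transposition is the involution on $\Lambda$, we have $h^{t}\in\Lambda$, and because the inverse of $h$ in the group $H(e)$ is unique, $h^{t}=h^{-1}\in H(e)$.

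The only point requiring a bit of care is the second step, where the geometry must be extracted from the combined algebraic data $eh=he=h$, $h^{-1}h=e$, and $\|h\|,\|h^{-1}\|\le 1$; everything else is then a clean bookkeeping exercise on the orthogonal decomposition furnished by Lemma \ref{lem:eIsProj}.
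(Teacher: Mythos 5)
Your proof is correct, but it takes a genuinely different route from the paper's. The paper first observes that $Q(e)$ is closed under transposition (because $e^t=e$ by Lemma \ref{lem:eIsProj}, the unique idempotent in $\overline{\langle h^t\rangle}$ is again $e$), deduces $h^t\in H(e)$ from $H(e)=e\,Q(e)$, and then obtains $h^th=e$ by an operator-order argument: $\hat 0\le (h^th)^k\le h^th=eh^the\le e$ together with a subsequence $(h^th)^{n_k}\to e$ squeezes $h^th$ onto $e$. You instead bring in the group inverse $h^{-1}\in H(e)\subset\Lambda$ and combine $\|h\|\le 1$, $\|h^{-1}\|\le 1$ with $eh=he=h$ to show that $h$ is an isometry of $\mathcal{R}(e)$ onto itself and vanishes on $\ker e$; since the decomposition $\mathbb{R}^8=\mathcal{R}(e)\oplus\ker e$ is orthogonal (again Lemma \ref{lem:eIsProj}), this exhibits $h$ as a partial orthogonal matrix with initial and final projection $e$, from which $h^th=hh^t=e$ is immediate. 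Your route is more elementary (no cluster-point sequence, no operator inequalities) and yields a sharper structural picture of the elements of $H(e)$; the paper's route gets $h^t\in H(e)$ without ever identifying it as the inverse. One small point to tighten: ``uniqueness of inverses'' applies to elements already known to lie in the group, and $h^t$ is not one a priori; the clean way to finish is the monoid computation $h^t=h^te=h^t(hh^{-1})=(h^th)h^{-1}=eh^{-1}=h^{-1}$, using $h^te=h^t$ (transpose $eh=h$ with $e^t=e$) and your identity $h^th=e$. This is a presentational fix, not a gap.
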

\begin{proof}
    By Proposition \ref{prop:SchwarzSemigroups},
    $H(e) = Q(e) e = e \, Q(e)$.
    If $x \in Q(e)$, then $x^{t} \in Q(e)$,
    by Lemma \ref{lem:eIsProj}.
    Hence, $h^{t} \in H(e)$.
    Since $e$ is a projection, $e \leq \mathbf{1}_{8}$;
    moreover $h^{t} h \leq \mathbf{1}_{8}$, because $h \in \Lambda$.
    Thus, $\hat{0} \leq (h^{t} h)^{k} \leq h^{t} h = e h^{t} h e \leq e$,
    for any $k \in \mathbb{N}$.
    If $n_{k}$ is a sequence of natural numbers such that
    $(h^{t} h)^{n_{k}} \overset{k}{\rightarrow} e$,
    we obtain that $h^{t} h = e$.
    By the same reasoning, we have also: $h h^{t} = e$.
\end{proof}

Let us recall that for a bistochastic map $S$, by
$K_{S}$ we denote the stable subspace of $S$ defined by
(see \cite[Eq. (3.3)]{miller2015stable}):
\begin{equation}
    K_{S} = \left \{ x \in M_{3} \:|\:
            \forall k \in \mathbb{N} \,\,
            || S^{k} x ||_{HS} = || S^{*k} x ||_{HS} =  ||x||_{HS}
    \right \},
\end{equation}
where $S^{*}$ is the adjoint map of $S$.
The fact that $H(\mathbf{1}_{8}) = \mathcal{G}(\Lambda)$
can be generalised to the following result.
\begin{theorem}
    Let $e \in \mathcal{E}(\Lambda)$ and
    $K_{S_{e}}$ be the stable subspace of the map $S_{e}$.
    Then $K_{S_{e}}$ is a Jordan algebra and
    $H(e) \cong \text{Aut}_{J} \, K_{S_{e}}$,
    the group of Jordan automorphisms of $K_{S_{e}}$.
\end{theorem}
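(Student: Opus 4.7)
The first claim is immediate from the structural theorem of \cite{miller2015stable}: for any positive bistochastic map $S$ on $M_{3}$, the stable subspace $K_{S}$ is a Jordan subalgebra of $M_{3}$ and $S|_{K_{S}}$ acts as a Jordan automorphism of $K_{S}$. Applied to the bistochastic map $S_{e}$, this yields the Jordan algebra structure on $K_{S_{e}}$. Moreover, since $e^{t}=e$ by Lemma \ref{lem:eIsProj} and $e^{2}=e$, the map $S_{e}$ is an HS-self-adjoint idempotent, hence an orthogonal projection in the HS inner product, and so $K_{S_{e}} = \mathrm{range}(S_{e}) = \mathrm{Fix}(S_{e})$.

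For the group isomorphism I would define $\Phi\colon H(e)\to \mathrm{Aut}_{J}\,K_{S_{e}}$ by $\Phi(h)=S_{h}|_{K_{S_{e}}}$ and verify it is a well-defined group isomorphism. Since $e$ is the identity element of the group $H(e)$, every $h\in H(e)$ satisfies $h = eh = he$, hence $S_{h} = S_{e}S_{h} = S_{h}S_{e}$; combined with the identification $K_{S_{e}} = \mathrm{range}(S_{e})$ this shows that $S_{h}$ preserves $K_{S_{e}}$. Proposition \ref{prop:htranspose} gives $h h^{t} = h^{t} h = e$, hence $S_{h} S_{h}^{*} = S_{h}^{*} S_{h} = S_{e}$ acts as the identity on $K_{S_{e}}$, so $S_{h}|_{K_{S_{e}}}$ is an HS-isometric bijection with inverse $S_{h^{t}}|_{K_{S_{e}}}$. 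That the restriction is a Jordan automorphism follows once more by applying the \cite{miller2015stable} result to $S_{h}$, whose stable subspace contains $K_{S_{e}}$ (a consequence of the isometry just established). The homomorphism property is immediate from $S_{gh}=S_{g}S_{h}$, and injectivity follows from $S_{h}=S_{h}S_{e}$: two elements of $H(e)$ inducing the same restriction agree on $\mathrm{range}(S_{e})$ and therefore on all of $M_{3}$.

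The main obstacle is surjectivity. Given $\phi\in\mathrm{Aut}_{J}\,K_{S_{e}}$, the natural candidate is the extension $\tilde\phi := \phi\circ S_{e}\colon M_{3}\to M_{3}$, and the delicate step is to show that $\tilde\phi$ is a positive bistochastic map of $M_{3}$. Positivity is where most of the work sits: $S_{e}$ is positive, and one must verify that $\phi$ maps the cone of positive elements of $K_{S_{e}}$ into the positive cone of $M_{3}$; this I would do by invoking the Jordan-algebra square root, so that every positive $A\in K_{S_{e}}$ can be written $A = C\circ C = C^{2}$ with $C=C^{*}\in K_{S_{e}}$, whence $\phi(A) = \phi(C)\circ\phi(C) = \phi(C)^{2} \geq 0$ in $M_{3}$. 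Bistochasticity is clear, since $\mathbf{1}_{3}\in K_{S_{e}}$ and a Jordan automorphism of a finite-dimensional Jordan subalgebra of $M_{3}$ necessarily preserves the unit and the trace. The resulting matrix $h\in\Lambda$ then satisfies $eh = he = h$, and the analogous construction with $\phi^{-1}$ in place of $\phi$ produces a two-sided inverse of $h$ with respect to the identity $e$, placing $h\in H(e)$ with $\Phi(h)=\phi$.
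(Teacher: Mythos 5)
Your argument is essentially the paper's own proof: both obtain the Jordan-algebra structure of $K_{S_{e}}$ from the cited result of \cite{miller2015stable}, both use Proposition \ref{prop:htranspose} ($h^{t}h=hh^{t}=e$) to show that $S_{h}|_{K_{S_{e}}}$ is a Jordan automorphism, and both prove surjectivity by extending $\varphi$ to $\varphi\circ S_{e}$ and extending $\varphi^{-1}$ to produce the inverse relative to the identity $e$. The only addition is that you spell out the positivity of the extension via Jordan square roots, a detail the paper leaves implicit; this is a welcome (and correct) elaboration rather than a different method.
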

\begin{proof}
    Because $e$ is idempotent, by
    \cite[Corollary 3]{miller2015stable},
    the space $K_{S_{e}} = S_{e}(M_{3})$ is a Jordan algebra.
    The map $S_{e}$ is in fact the conditional expectation onto $K_{S_{e}}$.
    Let $h \in H(e)$.
    By Proposition \ref{prop:htranspose}, for any $k\in \mathbb{N}$,
    $S_{h}^{* k} S_{h}^{k} = S_{h^{t}}^{k} S_{h}^{k} =  S_{(h^{t})^{k} h^{k}} = S_{e}$,
    and the same for $S_{h}^{k} S_{h}^{* k}= S_{e}$.
    Thus, the stable algebra $K_{S_{h}} = K_{S_{e}}$
    (compare \cite{miller2015stable}, eq. (3.3) and below).
    Again, by \cite[Corollary 3]{miller2015stable},
    $\varphi_{h} = S_{h} \big |_{K_{S_{e}}}$
    is a Jordan automorphism of the Jordan algebra $K_{S_{e}}$.

    On the other hand, if
    $\varphi$ is an arbitrary Jordan automorphism of $K_{S_{e}} \subset M_{3}$,
    then it could be extended to a bistochastic map $S_{h}$ on $M_{3}$,
    for some $h \in \Lambda$,
    by $S_{h} = \varphi \circ S_{e}$.
    Then
    $S_{h} S_{e} = S_{h} = S_{e} S_{h}$,
    because $S_{e}$ acts as the identity map on $K_{S_{e}}$.
    Hence, $he = eh = h$.
    In addition, since $\varphi$ is invertible on $K_{S_{e}}$,
    by extending $\varphi^{-1}$ to another bistochastic map
    $S_{h'}$,  $h' \in \Lambda$,
    we show that $h' \in H(e)$ and $h' h = h h' = e$,
    which proves that $h \in H(e)$.
\end{proof}

\begin{remark}
From the above theorem,
since $\mathcal{G}(\Lambda) = H(\mathbf{1}_{8})$,
 we infer that
the group $\mathcal{G}(\Lambda)$ consists of those matrices that represent
Jordan isomorphism on $M_{3}$,
i.e. for any $g \in \mathcal{G}(\Lambda)$,
there is a unitary matrix $U \in \text{SU}(3)$ such that
either $S_{g}(A) = U A U^{*}$ or $S_{g}(A) = U A^{t} U^{*}$
for all $A \in M_{3}$.
\end{remark}

\begin{theorem}
\label{thm:Idempotents}
The set $\mathcal{E}(\Lambda)$
    is a sum of seven disjoint subsets:
    \begin{equation}
        \mathcal{E}(\Lambda) = \bigcup \limits_{e_{0} \in J} \mathcal{E}_{G_{3}}(e_{0}),
    \end{equation}
    where $J = \left \{ \hat{0}, P_{8}, P_{38}, P_{138}, P_{1238}, P_{13468}, \mathbf{1}_{8} \right \}$.
\end{theorem}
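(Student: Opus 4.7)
The plan is to place the quotient $\mathcal{E}(\Lambda)/G_{3}$ in bijection with the set of unital Jordan subalgebras of $M_3$ taken up to $\text{SU}(3)$-conjugation, and then to classify those subalgebras via the structure theorem for formally real Jordan algebras. Given $e \in \mathcal{E}(\Lambda)$, the map $S_e$ is an idempotent bistochastic map, and by \cite[Corollary 3]{miller2015stable} (already invoked in the theorem immediately preceding) the image $K_{S_e} = S_e(M_3)$ is a unital Jordan subalgebra of $M_3$. Conversely, every such subalgebra carries a unique trace-preserving conditional expectation, which is bistochastic and idempotent, so $e \leftrightarrow K_{S_e}$ is a bijection. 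Applying Lemma~\ref{lem:equivClassesOfQe} to $g \in G_3 \subset \mathcal{G}(\Lambda)$ with $S_g = \text{Ad}\,U$, $U \in \text{SU}(3)$, shows that the orbit action $e \mapsto g e g^t$ corresponds to $K_{S_e} \mapsto U K_{S_e} U^{*}$, so $G_3$-orbits of idempotents match $\text{SU}(3)$-orbits of unital Jordan subalgebras.

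I would next classify unital Jordan subalgebras $\mathcal{J}$ of the self-adjoint part of $M_3$. By semisimplicity, $\mathcal{J}$ decomposes into simple factors $\mathcal{J} = \mathcal{J}_1 \oplus \cdots \oplus \mathcal{J}_k$ whose identities $c_i$ are mutually orthogonal projections in $M_3$ whose matrix ranks $r_i$ form a partition of $3$; only $(3)$, $(2,1)$, $(1,1,1)$ are possible. Each simple $\mathcal{J}_i$ is a unital Jordan subalgebra of $c_i M_3 c_i \cong M_{r_i}(\mathbb{C})$, which forces $\mathcal{J}_i \cong \mathbb{R}$ when $r_i=1$; $\mathcal{J}_i \in \{\mathbb{R},\, M_2(\mathbb{R})_{\text{sa}},\, M_2(\mathbb{C})_{\text{sa}}\}$ when $r_i=2$; and $\mathcal{J}_i \in \{\mathbb{R},\, M_3(\mathbb{R})_{\text{sa}},\, M_3(\mathbb{C})_{\text{sa}}\}$ when $r_i=3$. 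Spin factors $J\text{Spin}(n)$ for $n \geq 4$ are excluded by dimension counting: their minimal faithful complex representations have dimension $2^{\lfloor n/2 \rfloor} \geq 4$ and cannot embed unitally into $M_2(\mathbb{C})$ or $M_3(\mathbb{C})$ with identity $\mathbf{1}_3$. Enumerating admissible combinations yields exactly seven abstract Jordan algebras, of real dimensions $1, 2, 3, 4, 5, 6, 9$; the corresponding idempotent ranks are $0, 1, 2, 3, 4, 5, 8$, confirming the absence of rank-$6$ and rank-$7$ idempotents flagged in the text.

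To match the representatives in $J$, I would read off $K_{S_e}$ directly from the supported Gell-Mann matrices: $K_{S_{\hat{0}}} = \mathbb{R}\mathbf{1}_3$; $K_{S_{P_8}} = \text{span}(\mathbf{1}_3, \lambda_8) \cong \mathbb{R}^2$; $K_{S_{P_{38}}}$ equals the diagonal subalgebra $\cong \mathbb{R}^3$; $K_{S_{P_{138}}} \cong \mathbb{R} \oplus M_2(\mathbb{R})_{\text{sa}}$ (adding the real off-diagonal generator $\lambda_1$); $K_{S_{P_{1238}}} \cong \mathbb{R} \oplus M_2(\mathbb{C})_{\text{sa}}$; $K_{S_{P_{13468}}} \cong M_3(\mathbb{R})_{\text{sa}}$; and $K_{S_{\mathbf{1}_8}} = (M_3)_{\text{sa}}$. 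The $G_3$-orbits of the seven representatives are thus disjoint (non-isomorphic Jordan algebras cannot be conjugate), and together they exhaust $\mathcal{E}(\Lambda)$.

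The hardest step will be the $\text{SU}(3)$-uniqueness part of the bijection: showing that any two unital Jordan subalgebras of the same abstract type are conjugate by some $U \in \text{SU}(3)$. The first half — bringing the family of central idempotents $c_i$ to its standard block-diagonal form — is routine, since any two complete systems of orthogonal projections with prescribed matrix ranks are unitarily equivalent. The delicate part is the block-wise alignment of each simple summand with its model: in particular, for $\mathcal{J}_i \cong M_k(\mathbb{R})_{\text{sa}}$ two embeddings correspond to two antiunitary involutions on $\mathbb{C}^{r_i}$, and one must verify a Jordan-flavoured Skolem--Noether statement that any two such involutions are $\text{U}(r_i)$-conjugate and that the resulting unitary can be lifted to $\text{SU}(3)$ by a harmless scalar phase. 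A secondary obstacle is the negative half of the classification — ruling out every spin-factor or other exotic simple summand — which I would handle by bounding the dimension of a faithful complex representation against the available block size $r_i \leq 3$.
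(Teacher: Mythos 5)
Your proposal follows essentially the same route as the paper: it identifies each idempotent $e\in\mathcal{E}(\Lambda)$ with the conditional expectation onto the unital Jordan subalgebra $K_{S_{e}}\subset M_{3}$, classifies those subalgebras up to unitary conjugation into the same seven types, and matches them with the representatives in $J$ by rank/dimension counting. The only difference is that the paper outsources the classification and the $\mathrm{SU}(3)$-uniqueness step entirely to Theorems 5.3.8 and 6.2.3 of \cite{hanche1984jordan}, whereas you sketch a self-contained derivation (your spin-factor exclusion should also cover $J\mathrm{Spin}(2)$ and $J\mathrm{Spin}(3)$ inside a rank-3 block, which follows from the parity obstruction to anticommuting symmetries in odd dimension).
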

\begin{proof}
    If $e \in \mathcal{E}(\Lambda)$, then as above, $S_{e}$ is a conditional
    expectation map onto the Jordan algebra $K_{S_{e}}$.
    By Theorems 5.3.8 and 6.2.3 of \cite{hanche1984jordan},
    all Jordan subalgebras of $M_{3}$ are isomorphic (unitary equivalent)
    to one of the following:
    $\mathbb{C}\mathbf{1}$,
    $\mathbb{C} E_{12} \oplus \mathbb{C} E_{3}$,
    $\mathbb{C} E_{1} \oplus \mathbb{C} E_{2} \oplus \mathbb{C} E_{3}$,
    $M_{2}^{s} \oplus \mathbb{C} E_{3}$,
    $M_{2} \oplus \mathbb{C} E_{3}$,
    $M_{3}^{s}$,
    and
    $M_{3}$ itself;
    where $M_{k}^{s}$ is the Jordan algebra of symmetric matrices of
    size $k$: $M_{k}^{s} = \{ A \in M_{k}: A = A^{t} \}$;
    $E_{i}, i = 1,2,3$, are matrix units with 1 at the \emph{i}th diagonal entry
    and 0 elsewhere, and $E_{12} = E_{1} + E_{2}$.
    Hence, there is $g \in G_{3}$ such that
    $e = g e_{0} g ^{t}$, and $e_{0}$ is the orthogonal projection
    that represents the projection map onto precisely one of the algebras
    mentioned above.
    It is straightforward to check that then $e_{0} \in J$,
    and $\text{dim} \, e_{0} + 1$ is equal to the dimension of the respective
    Jordan algebra associated to it.
    Hence, $e \in \mathcal{E}_{G_{3}}(e_{0})$.
\end{proof}

\begin{remark}
    \label{rem:noRank6or7}
    In the light of the proof of the theorem above,
    since the are no Jordan subalgebras of $M_{3}$
    of dimension $7$ or $8$,
    nor there are idempotent elements of $\Lambda$
    that have rank $6$ or $7$.
\end{remark}

\begin{corollary}
\label{cor:q}
    For $e \in \mathcal{E}(\Lambda)$, since $e$ is a projection,
    $\text{dim} \, e
    \in \left \{ 0,1,2,3,4,5,8 \right \}$.
\end{corollary}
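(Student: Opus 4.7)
The plan is to derive the corollary as an immediate consequence of Theorem \ref{thm:Idempotents} together with Lemma \ref{lem:eIsProj}. By Lemma \ref{lem:eIsProj} every $e \in \mathcal{E}(\Lambda)$ is an orthogonal projection of $\mathbb{R}^8$, so its rank is a well-defined integer between $0$ and $8$ that equals what the corollary calls $\dim e$. By Theorem \ref{thm:Idempotents}, there exists $e_0 \in J = \{\hat{0}, P_8, P_{38}, P_{138}, P_{1238}, P_{13468}, \mathbf{1}_8\}$ and $g \in G_3$ with $e = g e_0 g^t$. Since $G_3 \subset \mathrm{SO}(8)$, the map $x \mapsto g x g^t$ is a similarity and in particular rank-preserving, so $\dim e = \dim e_0$.

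The next step is simply to read off the ranks of the seven explicit diagonal representatives in $J$. These are obtained by counting the number of $1$'s on the diagonal of each $P_{\cdot}$: $\mathrm{rk}\,\hat{0} = 0$, $\mathrm{rk}\,P_8 = 1$, $\mathrm{rk}\,P_{38} = 2$, $\mathrm{rk}\,P_{138} = 3$, $\mathrm{rk}\,P_{1238} = 4$, $\mathrm{rk}\,P_{13468} = 5$, and $\mathrm{rk}\,\mathbf{1}_8 = 8$. Consequently $\dim e \in \{0,1,2,3,4,5,8\}$, as claimed.

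There is no real obstacle here; the substantive content of the corollary is already contained in the classification of Jordan subalgebras of $M_3$ invoked in the proof of Theorem \ref{thm:Idempotents} and in Remark \ref{rem:noRank6or7}, which explains the gap at $6$ and $7$: the formula $\dim e_0 + 1 = \dim K_{S_{e_0}}$ together with the non-existence of Jordan subalgebras of $M_3$ of dimension $7$ or $8$ rules out idempotents of those ranks. The corollary itself is thus just a bookkeeping statement about the admissible ranks on the $G_3$-orbits.
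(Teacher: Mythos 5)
Your proof is correct and follows the same route the paper intends: the corollary is an immediate consequence of Theorem \ref{thm:Idempotents} (every idempotent is $G_3$-conjugate to one of the seven diagonal representatives in $J$) together with Lemma \ref{lem:eIsProj} and the fact that conjugation by an orthogonal matrix preserves rank. Reading off the ranks $0,1,2,3,4,5,8$ of the representatives finishes the argument exactly as the paper does.
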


We prove a useful decomposition of elements of $Q(e)$ in the following lemma.

\begin{lemma}
\label{lem:decomposition}
    Let $e \in \mathcal{E}(\Lambda)$.
    A matrix $x$ belongs to $\in Q(e)$, if and only if,
    $x = h + y$, where
    $h \in H(e)$, $H(e) y = y H(e) = \hat{0}$
    and $\lim_{k \rightarrow \infty} y^{k} = \hat{0}$.
    This decomposition is unique.
\end{lemma}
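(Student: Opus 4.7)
The plan is to isolate $h = exe$ as the only possible value of the group component; both directions and uniqueness will then flow from analysing this single candidate. Uniqueness itself is immediate: from $x = h_1 + y_1 = h_2 + y_2$ one gets $h_1 - h_2 = y_2 - y_1$, and left multiplication by $e$ collapses the left side to $h_1 - h_2$ (since $eh_i = h_i$ because $e$ is the identity of the group $H(e)$) and the right side to $\hat{0}$ (since $ey_i = \hat{0}$).

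For existence, given $x \in Q(e)$, Proposition \ref{prop:SchwarzSemigroups} gives $ex \in eQ(e) = H(e)$ and $xe \in Q(e)e = H(e)$. Because $e$ is the unit of $H(e)$, it fixes its elements under multiplication, so $exe = (ex)e = ex = e(xe) = xe$. Setting $h := exe \in H(e)$ and $y := x - h$, one reads off $ey = ye = \hat{0}$; then, for any $h' \in H(e)$, the relation $h' = eh' = h'e$ forces $h'y = h'(ey) = \hat{0}$ and $yh' = (ye)h' = \hat{0}$, establishing $H(e)y = yH(e) = \hat{0}$.

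Next, I would show $y^k \to \hat{0}$. Since $hy = yh = \hat{0}$, a direct induction yields $x^k = h^k + y^k$ for all $k \geq 1$. As $x \in Q(e)$, some subsequence $x^{n_k} \to e$; left-multiplying by $e$ annihilates $y^{n_k}$ and gives $h^{n_k} = e x^{n_k} \to e$, hence $y^{n_k} \to \hat{0}$. To promote this to $y^k \to \hat{0}$, I would bound $\|y\| \leq 1$: since $e$ is an orthogonal projection (Lemma \ref{lem:eIsProj}), the identities $ey = ye = \hat{0}$ give $y = (\mathbf{1}_8 - e) y (\mathbf{1}_8 - e)$, and decomposing $xv$ into its orthogonal components in $\mathcal{R}(e)$ and $\ker e$ produces $\|xv\|^2 = \|exe \cdot v\|^2 + \|yv\|^2$; combined with Proposition \ref{prop:LambdaAndBalls}, this forces $\|y\| \leq \|x\| \leq 1$. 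The submultiplicativity argument of Remark \ref{rem:Qof1} then upgrades subsequential decay to $y^k \to \hat{0}$.

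For the converse, if $x = h + y$ satisfies the stated conditions then $hy = yh = \hat{0}$ again yields $x^k = h^k + y^k$. Working inside the compact $\Lambda$, use that $h \in H(e)$ has a group inverse $h^{-1} \in H(e)$ with $h h^{-1} = e$: extract a subsequence along which both $h^{n_k}$ and $(h^{-1})^{n_k}$ converge to some $f$ and $g$ in $\Lambda$, and pass to the limit in $h^{n_k}(h^{-1})^{n_k} = e$ to obtain $fg = e$. Combined with $f^2 = f$ (the unique idempotent of $\overline{\langle h \rangle}$ by Proposition \ref{prop:UniqeClusterPoint}), this forces $f = f^2 g = fg = e$. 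Then $x^{n_k} = h^{n_k} + y^{n_k} \to e$, so $e_x = e$, i.e.\ $x \in Q(e)$. The delicate point I expect to be the main obstacle is the step $\|y\| \leq 1$: without orthogonality of the projection $e$, one would only have an algebraic decomposition $x = h + y$, not the norm control needed to turn subsequential decay of $y^{n_k}$ into $y^k \to \hat{0}$.
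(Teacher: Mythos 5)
Your proof is correct and follows essentially the same route as the paper's: you identify $h = exe = ex = xe$ via Proposition \ref{prop:SchwarzSemigroups}, take $y = x - h = e^{\perp}xe^{\perp}$, and upgrade $y^{n_k}\to\hat{0}$ to $y^{k}\to\hat{0}$ using $\|y\|\le\|x\|\le 1$ together with submultiplicativity, exactly as the paper does (it merely leaves the norm bound implicit). The only real difference is in the converse, where the paper simply asserts $h^{n_k}\to e$ for $h\in H(e)$ while you justify it; there, choose the subsequence $n_k$ so that $h^{n_k}$ converges to the unique idempotent of $\overline{\langle h\rangle}$ \emph{before} refining for $(h^{-1})^{n_k}$, since an arbitrary cluster point $f$ need not satisfy $f^2=f$.
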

\begin{proof}
    Suppose that $x = h + y \in \Lambda$, $h \in H(e)$, $y^{k}  \overset{k}{\rightarrow}  \hat{0}$,
    and $hy = yh = \hat{0}$.
    Because $h \in H(e)$, there is a sequence $n_{k} \in \mathbb{N}$ for $k =1,2,3,\ldots$,
    such that $h^{n_{k}} \overset{k}{\rightarrow} e$.
    Then $x^{n_{k}}  = h^{n_{k}} + y^{n_{k}}  \overset{k}{\rightarrow}  e$, i.e.
    $x \in Q(e)$.

    On the other hand,
    let us suppose that $x \in Q(e) \subset \Lambda$.
    We have
    $x = e x e + e^{\perp} x e + e x e^{\perp} + e^{\perp} x ^{\perp}$,
    where $e^{\perp} = \mathbf{1}_{8} - e$.
    By Proposition \ref{prop:SchwarzSemigroups},
    both $x e, x e \in H(e)$.
    Hence, $e x e = x e = e x$, which means that $e^{\perp} x e = \hat{0}$,
    and also $e x e^{\perp} = \hat{0}$.
    Let $h = e x e$, and $y = e^{\perp} x e^{\perp}$.
    We have that $x = h + y$ and $h y = y h = \hat{0}$.
    Then $h^{k} = e x^{k} e$, for $k \in \mathbb{N}$,
    which implies that $h \in Q(e)$.
    Since
    $h =  h e = e h$,
    $h \in H(e)$, again by Proposition \ref{prop:SchwarzSemigroups}.
    For $h' \in H(e)$, we have
    $h' y = h' e y = \hat{0} = y e h' = yh'$,
    i.e. $H(e) y = y H (e) = \hat{0}$.
    There is a sequence $n_{k}$ of natural numbers such that
    $x^{n_{k}}  \overset{k}{\rightarrow}  e$.
    Hence,
    $h^{n_{k}} = e x^{n_{k}} e  \overset{k}{\rightarrow} e^{3}  = e$.
    It follows that $y^{n_{k}}  \overset{k}{\rightarrow}  \hat{0}$,
    which is enough to say that $y^{k} \overset{k}{\rightarrow} \hat{0}$.
    Lastly,
    suppose that $x = h + y = h_{2} + y_{2}$,
    where $h_{2} \in H(e)$,
    and $y_{2}$ has the property:
    $h_{2} y_{2} = y_{2} h_{2} = \hat{0}$.
    For some sequence of natural numbers $m_{k}$, we have
    $e y_{2} = \lim_{k} h_{2}^{m_{k}} y_{2} = \hat{0}$.
    Then $h = e x = e h_{2} + e y_{2} = h_{2}$, and also $y = y_{2}$.
\end{proof}

For $x \in Q(e)$, since the decomposition described above is unique,
let us denote by $h_{x}$ and $y_{x}$ the matrices such that
$x = h_{x} + y_{x}$,
$h_{x} \in H(e)$, $H(e) y_{x} = y_{x} H(e) = \hat{0}$, and
$y_{x}^{k} \overset{k}{\rightarrow} \hat{0}$.
The above lemma justifies the following definition.

\begin{definition}
    Let $e \in \mathcal{E}(\Lambda)$.
    We define:
    $Q_{0}(e) = \left \{ x \in Q(e) \: | \: || y_{x} || < 1 \right \}$.
    For $i = 1, 2, 3, \ldots 8$,
    let $Q_{i}(e)$ be a set consisting of those $x = h_{x} + y_{x} \in Q(e)$,
    for which the largest singular value of $y_{x}$ is equal to 1
    with multiplicity $i$.
\end{definition}

It is obvious that
$Q(e) = \bigcup \left \{ Q_{i}(e), i=1,2,\ldots,8 \right \} \cup Q_{0}(e)$,
and these sets are disjoint (possibly empty).

\begin{proposition}
Let $e \in \mathcal{E}(\Lambda)$.
If $\mathrm{dim} \, e \in \left \{ 5,8 \right \}$,
then
$Q(e) = Q_{0}(e)$.
If $\mathrm{dim} \, e \leq 4$,
then
$Q_{i}(e) = \emptyset$ for $i \geq 5 - \mathrm{dim} \, e$.
\end{proposition}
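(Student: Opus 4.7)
The plan is to study iterates of $x^t x$, argue that they converge to an idempotent of $\Lambda$, and apply the rank constraint of Corollary~\ref{cor:q}.

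For $x = h_x + y_x \in Q(e)$, Proposition~\ref{prop:htranspose} gives $h_x^t h_x = e$, while Lemma~\ref{lem:decomposition} places $h_x$ and $y_x$ on the mutually orthogonal subspaces $\mathcal{R}(e)$ and $\mathcal{R}(e^\perp)$. The cross-terms in $x^t x$ therefore vanish and
\begin{equation*}
    x^t x = e + y_x^t y_x.
\end{equation*}
Since $e \cdot y_x^t y_x = y_x^t y_x \cdot e = 0$, iteration yields $(x^t x)^k = e + (y_x^t y_x)^k$. The positive semidefinite operator $y_x^t y_x$ has eigenvalue $1$ of multiplicity $i$ (by definition of $Q_i(e)$) and all remaining eigenvalues strictly below $1$, so $(y_x^t y_x)^k$ converges to the rank-$i$ spectral projection $P_i$ onto its $1$-eigenspace. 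Hence $(x^t x)^k \to e + P_i$; because $\Lambda$ is closed and $x^t x \in \Lambda$, this limit lies in $\Lambda$, and it is idempotent (using $e P_i = P_i e = 0$). Thus $e + P_i \in \mathcal{E}(\Lambda)$.

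By Corollary~\ref{cor:q}, the rank $\dim e + i$ of $e + P_i$ lies in $\{0,1,2,3,4,5,8\}$. For $\dim e < 8$, the rank-$8$ option is excluded, since it would force $y_x$ to be orthogonal on $\mathcal{R}(e^\perp)$, contradicting $y_x^k \to 0$; so $\dim e + i \leq 5$. For $\dim e = 8$, trivially $y_x = 0$ and $i = 0$. For $\dim e = 5$, the bound $\dim e + i \leq 5$ combined with $\dim e + i \geq \dim e$ forces $i = 0$; both cases give $Q(e) = Q_0(e)$. For $\dim e \leq 4$ the same argument yields the preliminary bound $i \leq 5 - \dim e$.

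The main obstacle is then to sharpen this to the claimed $i \leq 4 - \dim e$ by ruling out $\dim e + i = 5$ when $\dim e < 5$. In that borderline case $e + P_i$ is $G_3$-conjugate to $P_{13468}$ by Theorem~\ref{thm:Idempotents}, so $K_{S_{e+P_i}}$ is $G_3$-equivalent to the Jordan algebra $M_3^s$ of real symmetric $3 \times 3$ matrices, and $K_{S_e}$ embeds as a unital Jordan subalgebra of dimension $\dim e + 1$ inside it. For $\dim e = 4$ the required $5$-dimensional subalgebra would be of type $M_2 \oplus \mathbb{C} E_3$, whose self-adjoint part contains a $4$-dimensional spin factor generated by three anti-commuting square roots of $\mathbf{1}$; this cannot fit inside $M_3^s$, since the real symmetric $3 \times 3$ matrices admit no three pairwise anti-commuting involutions. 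For $\dim e \in \{0,1,2,3\}$ such subalgebras do embed abstractly, so the obstruction must come from positivity of $S_x$: I would pass to the partial isometry $w = h_x + y_x P_i = \lim_k x(x^t x)^k \in Q_i(e) \cap \Lambda$, whose $w^t w = e + P_i$ and $w w^t = e + P_i'$ are both rank-$5$ idempotents, and show that the Jordan isomorphism between the source and target copies of $M_3^s$ induced by $w$ cannot extend to a positive bistochastic map $S_w$ compatibly with the prescribed restriction $h_x$ on $K_{S_e}$ — this compatibility check is the technical heart of the argument.
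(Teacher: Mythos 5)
Your first half coincides with the paper's own argument and is correct: the cross terms in $x^{t}x$ vanish by Proposition~\ref{prop:htranspose} and Lemma~\ref{lem:decomposition}, $(x^{t}x)^{k}=e+(y_{x}^{t}y_{x})^{k}$ converges to $e+p$ with $p$ the spectral projection onto the unit eigenspace of $y_{x}^{t}y_{x}$, the limit is an idempotent of $\Lambda$ different from $\mathbf{1}_{8}$, and the rank classification of Corollary~\ref{cor:q} (no idempotents of rank $6$ or $7$) forces $\dim e + i \leq 5$. That settles $\dim e\in\{5,8\}$ completely.

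The gap is exactly where you flag it. For $\dim e\leq 4$ the rank count only yields $i\leq 5-\dim e$, while the statement asserts $Q_{i}(e)=\emptyset$ already at $i=5-\dim e$; the borderline configuration $\dim e+i=5$, in which $e+p$ is a perfectly admissible rank-$5$ idempotent ($G_{3}$-conjugate to $P_{13468}$), is not excluded by that count. Your attempt to close it is incomplete: the $\dim e=4$ case is right in spirit (a unital Jordan embedding of $M_{2}\oplus\mathbb{C}E_{3}$ into $M_{3}^{s}$ would send $E_{12}$ to a projection $q$ whose corner $qM_{3}^{s}q$ has dimension at most $3$, too small to receive $M_{2}$), though your phrasing via ``anti-commuting involutions'' is off, since the relevant elements are square roots of the subunit $E_{12}$, not of $\mathbf{1}_{3}$; and for $\dim e\leq 3$ you correctly observe that the abstract embedding exists and then explicitly defer the ``technical heart'' without proving anything. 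So as a proof of the statement as written, this does not close. You should know, however, that the paper's own proof does no better on this point: it dismisses $\dim e\leq 4$ with ``by the same argument, it is impossible that $\dim p+\dim e\geq 5$,'' and the same argument in fact only excludes $\dim p+\dim e\geq 6$. Either the intended bound is the weaker $Q_{i}(e)=\emptyset$ for $i\geq 6-\dim e$ (which would explain why Theorem~\ref{thm:LowerIndices} is stated for $i+j\leq 5$ rather than $i+j\leq 4$), or an additional argument of the kind you sketch is genuinely required and is absent from the paper. Your instinct that the boundary case needs more than the rank count is sound.
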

\begin{proof}
    If $\mathrm{dim} \, e = 8$,
    then $e = \mathbf{1}_{8}$, and since by Remark \ref{rem:Qof1},
    $Q(\mathbf{1}_{8}) = H(\mathbf{1}_{8}) = \mathcal{G}(\Lambda)$,
    obviously $Q(\mathbf{1}_{8}) = Q_{0}(\mathbf{1}_{8})$.
    Suppose that $e \neq \mathbf{1}_{8}$
    and $x = h_{x} + y_{x} \in Q(e)$.
    Then $x^{t} x = h_{x}^{t} h_{x} + y_{x}^{t} y_{x} = e + y_{x}^{t} y_{x}$,
    by Proposition \ref{prop:htranspose};
    $y_{x} e = e y_{x} = \hat{0}$,
    and because for any $k \in \mathbb{N}$, $(x^{t} x)^{k} \in \Lambda$,
    we have that $e + p = \lim_{k} x^{t} x \in \Lambda$,
    where $p$ is a orthogonal projection onto the space spanned by
    eigenvectors of $y_{x}^{t} y_{x}$ with eigenvalue 1.
    Of course, $e p = p e = \hat{0}$, so
    the matrix $e + p \in \mathcal{E}(\Lambda)$.
    It must be that $e + p \neq \mathbf{1}_{8}$,
    otherwise $x^{t} x = \mathbf{1}_{8}$, and hence
    $x \in \mathcal{G}(\Lambda) = Q(\mathbf{1}_{8})$, a contradiction.
    If $\mathrm{dim}\,e = 5$, then $p = \hat{0}$,
    because by Theorem \ref{thm:Idempotents},
    there are no idempotent elements of $\Lambda$
    with rank $6$ or $7$.
    Hence $||y|| < 1$ and $x \in Q_{0}(e)$.
    By the same argument, for $\textrm{dim}\, e \leq 4$,
    it is impossible that $\mathrm{dim} \, p  + \mathrm{dim}\, e \geq 5$,
    so $Q_{i}(e) = \emptyset$, for $i \geq 5 - \mathrm{dim} \, e$,
    because by definition, $i = \textrm{dim}\,p$.
\end{proof}

For the sake of convenience, let us introduce the following sequence
of elements of $\mathcal{E}(\Lambda)$:
$p_{0} = \hat{0}, p_{1} = P_{8}$, $p_{2} = P_{38}$,
$p_{3} = P_{138}$, $p_{4} = P_{1238}$, $p_{5} = P_{13468}$.

\begin{theorem}
\label{thm:LowerIndices}
    Let $i,j$  be integers such that
    $1 \leq i \leq 4$, $0 \leq j \leq 4$ and $i+j \leq 5$.
    If $x \in Q_{i}(p_{j})$,
    then there exist $g_{1}, g_{2} \in G_{3}$ and $z \in Q_{0}(p_{i+j})$
    such that $x = g_{1} z g_{2}$.
\end{theorem}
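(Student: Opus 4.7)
The strategy is to enlarge $p_j$ to a rank-$(i+j)$ idempotent of $\Lambda$ that is naturally attached to $x$, use Theorem~\ref{thm:Idempotents} to rotate this enlarged idempotent to the canonical form $p_{i+j}$ by two elements of $G_3$ (one acting on each side), and then verify that the resulting conjugate of $x$ lies in $Q_0(p_{i+j})$.

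First I would apply Lemma~\ref{lem:decomposition} to write $x = h_x + y_x$ with $h_x \in H(p_j)$ and $p_j y_x = y_x p_j = \hat 0$, and then split $y_x = q + w$ via the singular value decomposition, where $q$ is the partial isometry collecting the $i$ unit singular values and $\|w\| < 1$. Set $p := q^t q$ and $p' := q q^t$; these are rank-$i$ orthogonal projections, and the orthogonality $p_j y_x = y_x p_j = \hat 0$ forces $p \, p_j = p' \, p_j = \hat 0$. Repeating the calculation in the proof of the previous proposition, $(x^t x)^n = p_j + (y_x^t y_x)^n \to p_j + p$ and $(x x^t)^n \to p_j + p'$, so by closedness of $\Lambda$ the projections $e := p_j + p$ and $e' := p_j + p'$ belong to $\mathcal{E}(\Lambda)$. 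Both have rank $i+j \le 5$, so Theorem~\ref{thm:Idempotents} places them in the single $G_3$-orbit $\mathcal{E}_{G_3}(p_{i+j})$: choose $a, b \in G_3$ with $a^t e' a = p_{i+j}$ and $b\, e\, b^t = p_{i+j}$.

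Next I would define $z := a^t x\, b^t$, which lies in $\Lambda$ because $G_3 \subset \Lambda$ and $\Lambda$ is a semigroup. Using $h_x e = h_x$, $y_x e = y_x p = q$, and the mirror identities $e' h_x = h_x$, $e' y_x = p' y_x = q$, a direct computation gives
\begin{equation}
\nonumber
z\, p_{i+j} = a^t x\, e\, b^t = a^t(h_x + q)\, b^t = a^t e'\, x\, b^t = p_{i+j}\, z.
\end{equation}
Call this common value $h_z$. Then $h_z^t h_z = b(h_x+q)^t(h_x+q)b^t = b\, e\, b^t = p_{i+j}$ and likewise $h_z h_z^t = a^t e' a = p_{i+j}$; since $\Lambda$ is closed under transpose, $h_z$ and $h_z^t$ are mutually inverse elements of $\Lambda$ with identity $p_{i+j}$, so $h_z \in H(p_{i+j})$. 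The complement $y_z := z - h_z = a^t w\, b^t$ satisfies $\|y_z\| = \|w\| < 1$, and the identities $e' w = w\, e = \hat 0$ (the $p_j$-parts vanish by construction, and the $p'$- and $p$-parts cancel because $p' y_x = q$ and $y_x p = q$) give $p_{i+j} y_z = y_z p_{i+j} = \hat 0$, whence $H(p_{i+j}) y_z = y_z H(p_{i+j}) = \{\hat 0\}$. Lemma~\ref{lem:decomposition} then identifies $z = h_z + y_z$ as the decomposition witnessing $z \in Q(p_{i+j})$, and $\|y_z\| < 1$ upgrades this to $z \in Q_0(p_{i+j})$; rewriting yields $x = a z b$ with $g_1 = a$, $g_2 = b \in G_3$.

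The hardest step will be establishing that $e$ and $e'$ are genuine idempotents of $\Lambda$, not just orthogonal projections obtained from the spectral theorem applied to $y_x^t y_x$ and $y_x y_x^t$. This is where the compact-semigroup structure really enters: the closedness of $\Lambda$ is what allows the matrix-power limits of $x^t x$ and $x x^t$ to land back in $\Lambda$, and it is exactly this that brings Theorem~\ref{thm:Idempotents} into play to pin down the $G_3$-orbit of $e, e'$. Once this is in place, the block orthogonality $p_j \perp p \perp (\mathbf{1}_8 - e)$ (and its left-hand analogue with $p'$) reduces the remaining verifications to routine manipulations.
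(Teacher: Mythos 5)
Your proof is correct and follows essentially the same route as the paper's: the same decomposition $x=h_x+y_x$, the same singular-value splitting of $y_x$ into a unit-singular-value part and a strict contraction, the same limits of $(x^tx)^k$ and $(xx^t)^k$ producing rank-$(i+j)$ idempotents of $\Lambda$ that Theorem \ref{thm:Idempotents} rotates to $p_{i+j}$, and the same appeal to Lemma \ref{lem:decomposition} to certify $z\in Q_0(p_{i+j})$. If anything, your verification that $h_z\in H(p_{i+j})$ via $h_z^th_z=h_zh_z^t=p_{i+j}$ is slightly more complete than the paper's terse justification of that step.
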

\begin{proof}
    Let $x \in Q_{i}(p_{j})$ and
    $x = h_{x} + y_{x}$, as above.
    Since this decomposition is unique, we can write as in the proof of
    Lemma \ref{lem:decomposition}:
    $h_{x} = p_{j} x p_{j}$, and $y_{x} = p_{j}^{\perp} x p_{j}^{\perp}$,
    where $p_{j}^{\perp} = \mathbf{1}_{8} - p_{j}$.
    Because $p_{j} = p_{j}^{t}$, then
    $h y^{t} = y h^{t} = \hat{0}$, and thus
    $x x^{t}  = h h^{t} + y y^{t} = p_{j} + y y^{t}$,
    by \mbox{Proposition \ref{prop:htranspose}}.
    Let $R_{1} (p_{i} + y_{0}) R_{2}$ be the singular value decomposition of $y$,
    i.e. $R_{1}, R_{2}  \in \text{O}(8)$ are orthogonal matrices,
    and $y_{0}$ is diagonal with the only possible non-zero entries $s_{1}, s_{2}, \ldots, s_{8-i}$,
    such that $1 > s_{1} \geq s_{2} \geq \ldots \geq s_{8-i} \geq 0$,
    and $p_{i} y_{0} = y_{0} p_{i} = \hat{0}$.
    Then for $k \in \mathbb{N}$,
    $(x x^{t})^{k} = p_{j} + R_{1}( p_{i} + (y_{0} y_{0}^{t})^{k}) R_{1}^{t} \in \Lambda$,
    and  because $\Lambda$ is closed:
    $e_{1} = p_{j} + R_{1} p_{i} R_{1}^{t} = \lim_{k} (x x^{t})^{k} \in \Lambda$.
    Also, since $h y^{t} = y h^{t} = \hat{0}$,
    we have that $p_{j} R_{1} p_{i} R_{1}^{t} = R_{1} p_{i} R_{1}^{t} p_{j} = \hat{0}$.
    It follows that $e_{1}$ is an idempotent and $\text{rank} \, e_{1} = i+j$.
    By Theorem \ref{thm:Idempotents}, there is $g_{1} \in G_{3}$ such that
    $e_{1} = g_{1} p_{i+j} g_{1}^{t}$.
    A similar argument,
    applied this time to $x^{t} x$, shows that there is $g_{2} \in G_{3}$
    such that the idempotent $e_{2} = p_{j} + R_{2}^{t} p_{i} R_{2} \in \Lambda$
    could be written as $e_{2} = g_{2}^{t} p_{i+j} g_{2}$.
    Let $z = g_{1}^{t} x g_{2}^{t}$.
    What remains to show is that $z \in Q_{0}(p_{i+j})$.
    It is evident that $z \in \Lambda$.
    One can easily check that
    $p_{i+j} z = z p_{i+j}$,
    and hence $z = h_{z} + y_{z}$,
    where $h_{z} = p_{i+j} z p_{i+j}$,
    and $y_{z} = p_{i+j}^{\perp} z p_{i+j}^{\perp}$,
    $p_{i+j}^{\perp} = \mathbf{1}_{8} - p_{i+j}$.
    Because $h_{z} = p_{i+j} h_{z} = h_{z} p_{i+j}$,
    $h_{z} \in H(p_{i+j})$.
    Obviously, $h y_{z} = y_{z} h = \hat{0}$, for any $h \in H(p_{i+j})$.
    In addition, we have
    $y_{z} = g_{1}^{t}R_{1} y_{0} R_{2} g_{2}^{t}$,
    so $||y_{z}^{k}|| \leq ||y_{0}||^{k} = s_{1}^{k} \overset{k}{\rightarrow} \hat{0}$.
    Therefore, by Lemma \ref{lem:decomposition},
    $z \in Q(p_{i+j})$.
    Since $||y_{z}|| < 1$,
    $z \in Q_{0}(p_{i+j})$, which ends the proof.
\end{proof}

The main result of this paper could be captured in the following remark.
\begin{corollary}
\label{cor:q0}
    Suppose that $x \in \text{Ext}_{0}(\Lambda)$ and $S_{x}$ is not a Jordan isomorphism.
    Then there exist $g_{1}, g_{2} \in G_{3}$
    such that $g_{1}^{t} x g_{2}^{t} \in Q_{0}(\hat{0}) \cup Q_{0}(P_{8})$.
    In other words,
    either $||x|| < 1$,
    or
    $x = g_{1} (P_{8} + y) g_{2}$,  $y P_{8} = P_{8} y = \hat{0}$,
    and $|| y || < 1$.
\end{corollary}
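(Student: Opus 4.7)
The plan is to combine the structural results developed so far---the classification of idempotents (Theorem~\ref{thm:Idempotents}) and the reduction of $Q_{i}(p_{j})$ to $Q_{0}(p_{i+j})$ (Theorem~\ref{thm:LowerIndices})---with the extremality hypothesis, in a three-stage argument.

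First I would place $x$ in canonical form. Since $x \in \text{Ext}_{0}(\Lambda) \subset \Lambda$ and the sets $Q(e)$ partition $\Lambda$ disjointly, there is a unique idempotent $e_{x} \in \mathcal{E}(\Lambda)$ with $x \in Q(e_{x})$; the hypothesis that $S_{x}$ is not a Jordan isomorphism excludes $e_{x} = \mathbf{1}_{8}$, since $Q(\mathbf{1}_{8}) = \mathcal{G}(\Lambda)$. Theorem~\ref{thm:Idempotents} and Lemma~\ref{lem:equivClassesOfQe} then yield $g \in G_{3}$ and $p_{j} \in \{\hat{0}, P_{8}, P_{38}, P_{138}, P_{1238}, P_{13468}\}$ with $g^{t} x g \in Q(p_{j})$. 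Next, I would reduce to the $Q_{0}$ stratum: if $g^{t} x g \in Q_{i}(p_{j})$ with $i \geq 1$, the preceding Proposition gives $i + j \leq 5$ so that Theorem~\ref{thm:LowerIndices} applies and produces $z \in Q_{0}(p_{i+j})$ together with $g_{1}', g_{2}' \in G_{3}$ satisfying $g^{t} x g = g_{1}' z g_{2}'$. Absorbing $g$ into $g_{1}', g_{2}'$ gives $g_{1}, g_{2} \in G_{3}$ such that $g_{1}^{t} x g_{2}^{t} \in Q_{0}(p_{k})$ for some $k \in \{0, 1, 2, 3, 4, 5\}$; and because $G_{3}$ consists of Jordan automorphisms of $M_{3}$---themselves extremal in $\mathcal{P}(M_{3})$---the conjugate $z := g_{1}^{t} x g_{2}^{t}$ also lies in $\text{Ext}_{0}(\Lambda)$.

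The main obstacle is the third stage: ruling out $k \geq 2$. For such $k$ the stable Jordan algebra $K_{S_{p_{k}}} \subset M_{3}$ has dimension $k + 1 \geq 3$, and the conditional expectation $S_{p_{k}}$ is non-extremal in $\mathcal{P}(M_{3})$---for $k \in \{2, 3, 4\}$ because the algebra decomposes into a non-trivial direct sum of Jordan ideals, and for $k = 5$ via the explicit identity $S_{p_{5}}(A) = \frac{1}{2}(A + A^{t})$ as an average of the identity map and the transposition. Writing $z = h_{z} + y_{z}$ and using that $S_{h_{z}} = \varphi \circ S_{p_{k}}$ for the Jordan automorphism $\varphi$ of $K_{S_{p_{k}}}$ associated to $h_{z}$, this non-extremality would lift to a non-trivial decomposition $S_{h_{z}} = U_{1} + U_{2}$ with $U_{1}, U_{2}$ positive. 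Then, exploiting the identity $S_{z} = S_{h_{z}} + S_{y_{z}} - T_{0}$, where $T_{0}(A) = \frac{\text{tr}(A)}{3}\mathbf{1}$ corresponds to $\hat{0} \in \Lambda$, one would distribute the residual term $S_{y_{z}} - T_{0}$ between $U_{1}$ and $U_{2}$ to produce a non-trivial decomposition $S_{z} = V_{1} + V_{2}$ of positive maps, contradicting the extremality of $S_{z}$. The delicate point is preserving positivity of each $V_{i}$, because $S_{y_{z}} - T_{0}$ is itself not positive; the strict contractivity $||y_{z}|| < 1$ together with the specific structure of $\varphi$ on $K_{S_{p_{k}}}$ should provide the required slack, probably via a case analysis over $k \in \{2, 3, 4, 5\}$. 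The surviving cases $k = 0, 1$ escape the argument precisely because $H(\hat{0})$ and $H(P_{8})$ are trivial, so no non-trivial $\varphi$ is available to initiate the decomposition.
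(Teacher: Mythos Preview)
Your first two stages---placing $x$ in some $Q(p_{j})$ via Theorem~\ref{thm:Idempotents} and Lemma~\ref{lem:equivClassesOfQe}, then passing to $Q_{0}(p_{k})$ via Theorem~\ref{thm:LowerIndices}---are correct and mirror the paper's argument. The divergence, and the gap, is at your third stage.

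The paper does not attempt your direct decomposition argument. Instead, it invokes \cite[Theorem~2]{miller2015stable}, an external result asserting that for an extremal positive bistochastic map the stable Jordan algebra $K_{S}$ can only be (up to isomorphism) $\mathbb{C}\mathbf{1}$, $\mathbb{C}E_{12}\oplus\mathbb{C}E_{3}$, or $M_{3}$. Since for $z\in Q_{0}(p_{k})$ one has $K_{S_{z}}=K_{S_{p_{k}}}$ of dimension $k+1$, this immediately forces $k\in\{0,1\}$ (the case $k=8$ having been excluded). The paper applies this twice: once to restrict the initial idempotent $e_{0}$ to $\{\hat{0},P_{8},\mathbf{1}_{8}\}$, and again after Theorem~\ref{thm:LowerIndices} to rule out landing in $Q_{0}(p_{k})$ with $k\ge 2$.

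Your replacement for this citation is not a proof but a programme: you write that distributing $S_{y_{z}}-T_{0}$ between the pieces of $S_{h_{z}}=U_{1}+U_{2}$ \emph{``should provide the required slack, probably via a case analysis''}. That is precisely the content of the cited theorem, and it is genuinely non-trivial. The map $S_{y_{z}}-T_{0}$ sends $\lambda(a_{0},\vec{a})\mapsto\lambda(0,y_{z}\vec{a})$; it is traceless and not positive, and the bound $\|y_{z}\|<1$ alone does not guarantee that $U_{i}+\alpha_{i}(S_{y_{z}}-T_{0})$ remains positive, since positivity on $M_{3}$ is governed by the geometry of the rank-one projections rather than by the operator norm on $\mathbb{R}^{8}$. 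Absent either a citation of \cite[Theorem~2]{miller2015stable} or a completed version of this decomposition (which would amount to reproving that theorem), the argument does not close.
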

\begin{proof}
Let $x \in  \text{Ext}_{0}(\Lambda)$.
    By Lemma \ref{lem:equivClassesOfQe} and Theorem \ref{thm:Idempotents}, there is $g \in G_{3}$ such that
    $g^{t} x g \in Q(e_{0})$, and $e_{0} \in J$.
    Because $S_{x}$ is an extremal positive map, then by
    \cite[Theorem 2]{miller2015stable},
    $e_{0} \in \left \{ \hat{0}, P_{8}, \mathbf{1}_{8} \right \}$.
    By Remark \ref{rem:Qof1}, $x \notin Q(\mathbf{1}_{8})$, hence
    $z = g^{t} x g \in Q(\hat{0}) \cup Q(P_{8})$,
    and $z \in \text{Ext}_{0}(\Lambda)$
    (cf. \cite[Lemma 3.1.2]{stormer2013positive}).
    Suppose that $z \in Q(P_{8})$.
    By Theorem \ref{thm:LowerIndices},
    because $z \in \text{Ext}_{0}(\Lambda)$,
    $z \notin Q_{i}(P_{8})$
    for $i \geq 1$ (see  \cite[Theorem 2]{miller2015stable}).
    Hence $z \in Q_{0}(P_{8})$, and put $g_{1} = g$ and $g_{2} = g^{t}$.
    Then $x = g_{1} z g_{2}$,
    and by Lemma \ref{lem:decomposition}, because $H(P_{8}) = \left \{ P_{8} \right \}$
    (see Remark \ref{rem:GOfLambda}),
    $z = P_{8} + y$,  $y P_{8} = P_{8} y = \hat{0}$, and $||y|| < 1$.
    On the other hand, if $z \in Q(\hat{0})$,
    then by the same reasoning either $z \in Q_{0}(\hat{0})$,
    and $||z|| < 1$, because $H(\hat{0}) = \left \{ \hat{0} \right \}$, or
    $z \notin Q_{0}(\hat{0})$ and then there are $g_{01}, g_{02} \in G_{3}$ such that
    $g_{01} z g_{02} \in Q_{0}(P_{8})$.
    Then put $g_{1} = g g_{01}^{t}$ and $g_{2} = g_{02}^{t} g^{t}$,
    and the assertion follows.
\end{proof}

We summarise the results presented above by saying that they allow
to narrow the task of finding positive extremal and bistochastic maps
on $M_{3}(\mathbb{C})$ to three specific groups;
and examples in all these groups have been found previously.
First, there are Jordan isomorphisms, represented by matrices from
$Q(\mathbf{1}_{8}) = \mathcal{G}(\Lambda)$.
Second, there are maps that could be called \emph{strongly ergodic}
\cite{miller2015stable},
belonging to the class represented by
$\left \{ x \in Q_{0}(\hat{0}) \:|\: ||x|| = \frac{1}{2} \right \}$,
of which the celebrated Choi map
\cite{choi1977extremal} is an example.
For a generalised Choi map $\Phi[a,b,c]$:
\begin{equation}\label{eq:choi}
\Phi[a,b,c](X)=\\
\frac{1}{2}
\begin{pmatrix}
ax_{11}+bx_{22}+cx_{33} & -x_{12} & -x_{13} \\
-x_{21} & cx_{11}+ax_{22}+bx_{33} & -x_{23} \\
-x_{31} & -x_{32} & bx_{11}+cx_{22}+ax_{33}
\end{pmatrix},
\end{equation}
where $X = (x_{ij})_{i,j = 1}^{3}$,
if we parametrise:
\begin{equation}
a(t)=\dfrac{(1-t)^2}{1-t+t^2},\quad b(t)=\dfrac{t^2}{1-t+t^2},\quad c(t)=\dfrac 1{1-t+t^2},
\end{equation}
$0 \leq t < 1$,
then $\Phi[a(t), b(t), c(t)]$ is an extremal bistochastic map on $M_{3}$.
For $t = 0$, it is the map proposed originally by Choi,
and for $t = 1$ it is a completely positive map.
See the paper by
K.-C.\,Ha and S.-H.\,Kye \cite{ha2011entanglement} for more details.
The family of matrices $x_{t}$ such that
$S_{x_{t}} = \Phi[a(t), b(t), c(t)]$ is given by
\begin{equation}
    x_{t} = \begin{pmatrix}
        - \frac{1}{2} & 0 & 0 & 0 & 0 & 0 & 0 & 0 \\
        0 & - \frac{1}{2} & 0 & 0 & 0 & 0 & 0 & 0 \\
        0 & 0 & \frac{1 - 4t + t^{2}}{4(1 - t + t^{2})} & 0 & 0 & 0 & 0 & - \frac{\sqrt{3}(1 - 4t + t^{2})}{4(1 - t + t^{2})} \\
        0 & 0 & 0 & - \frac{1}{2} & 0 & 0 & 0 & 0 \\
        0 & 0 & 0 & 0 & - \frac{1}{2} & 0 & 0 & 0 \\
        0 & 0 & 0 & 0 & 0 & - \frac{1}{2} & 0 & 0 \\
        0 & 0 & 0 & 0 & 0 & 0 & - \frac{1}{2} & 0 \\
        0 & 0 & \frac{\sqrt{3}(1 - 4t + t^{2})}{4(1 - t + t^{2})} & 0 & 0 & 0 & 0 & \frac{1 - 4t + t^{2}}{4(1 - t + t^{2})} \\
    \end{pmatrix}.
\end{equation}
It is worth noting that each
$x_{t} = \frac{1}{2} R_{t}$, where $R_{t} \in \text{O}(8)$,
as in Proposition \ref{prop:oneHalfofOrthogonal}.

Lastly, there are maps represented by elements of $Q_{0}(P_{8})$,
one example of which was proposed in \cite{miller2015stable}:
\begin{equation}
\label{eq:DefinitionOfS}
S_{0}(X) \:=\: \begin{pmatrix}
        \frac{1}{2}(x_{11} + x_{22}) & 0 & \frac{1}{\sqrt{2}} x_{13} \\
        0 & \frac{1}{2}(x_{11} + x_{22}) & \frac{1}{\sqrt{2}} x_{32} \\
        \frac{1}{\sqrt{2}} x_{31} & \frac{1}{\sqrt{2}} x_{23} & x_{33}
        \end{pmatrix}.
\end{equation}
The matrix $x \in \Lambda$ such that $S_{x} = S_{0}$ is given by
a diagonal matrix
\begin{equation}
 x =
\text{diag}(0,0,0,
\frac{1}{\sqrt{2}}, \frac{1}{\sqrt{2}}, \frac{1}{\sqrt{2}}, \frac{-1}{\sqrt{2}},
1).
\end{equation}

We say that two elements $x, y \in \Lambda$
are equivalent if, and only if, there are $g_{1}, g_{2} \in G_{3}$
such that $x = g_{1} y g_{2}$.
Then the task of finding elements in $\text{Ext}_{0}(\Lambda)$,
up to this equivalence relation,
consists of:
\begin{enumerate}
\item finding all $R \in \mathrm{O}(8)$ such that
    $\frac{1}{2}R \in \text{Ext}_{0}(\Lambda)$;
\item determining which elements $x \in \Lambda$,
    such that $\frac{1}{2} < ||x|| < 1$, belong to $\text{Ext}_{0}(\Lambda)$;
\item finding all $y$, such that
    $y P_{8} = P_{8} y = \hat{0}$, $||y|| < 1$,
    and $P_{8} + y \in \text{Ext}_0(\Lambda)$.
\end{enumerate}

In this paper,
we have identified the \emph{bistochastic} maps of $M_{3}$
with a subset $\Lambda \subset M_{8}(\mathbb{R})$,
and studied its properties as a compact and convex semigroup.
In applications to the quantum information theory,
one would ideally seek the classification of \emph{all} positive maps,
not only the ones that preserve trace and identity.
The assumption that a positive map has only one of those properties,
i.e. it preserves either the trace or the identity,
can be taken without loss of the generality.
To assume, however, that a map is bistochastic does present a constraint
to a certain degree.
The semigroup of all positive maps that preserve, say, only
the identity matrix,
could be identified with a compact and convex semigroup
within the product space
$M_{8}(\mathbb{R}) \! \times \!  \mathbb{R}^{8}$,
together with the group multiplication that comes from
a semi-direct product of $M_{8}(\mathbb{R})$ and $\mathbb{R}^{8}$.
This semigroup is almost certainly much more difficult to study,
and we decided to start our investigation from a simpler case
of the semigroup $\Lambda$.
In the future,
we hope that the methods proposed,
together with some other, mostly geometric techniques,
will allow to generalise the above result to a higher dimensional case and
to bring the research closer to the final classification
of extremal positive maps on matrix algebras.



\bibliographystyle{abbrv}
\bibliography{./biblio}

\end{document}